\documentclass{article}

\PassOptionsToPackage{numbers,sort&compress}{natbib}
\usepackage[preprint]{neurips_2025}

\usepackage[utf8]{inputenc}
\usepackage[T1]{fontenc}
\usepackage{microtype}
\usepackage{url}
\usepackage{booktabs}
\usepackage{graphicx}
\usepackage{amsfonts}
\usepackage{amsmath}
\usepackage{amssymb}
\usepackage{amsthm}
\usepackage{mathtools}
\usepackage{nicefrac}
\usepackage{xcolor}
\usepackage{xspace}
\usepackage{etoolbox}
\usepackage{algorithm}
\usepackage{algpseudocode}
\usepackage{thm-restate}
\usepackage[most]{tcolorbox}
\usepackage{hyperref}
\usepackage[capitalize,noabbrev]{cleveref}

\newtcolorbox{promptbox}[1][]{
  enhanced,
  colback=black!92,
  colframe=black!92,
  coltext=white,
  boxrule=0.4pt,
  arc=2pt,
  left=8pt,
  right=8pt,
  top=4pt,
  bottom=4pt,
  fontupper=\ttfamily\small,
  fonttitle=\small\ttfamily\bfseries,
  coltitle=white,
  colbacktitle=prompttitle,
  attach boxed title to top left={xshift=6pt, yshift=-2pt},
  boxed title style={colback=prompttitle, frame hidden, sharp corners=south, top=2pt, bottom=2pt, left=4pt, right=4pt},
  #1
}

\definecolor{rolecoluser}{HTML}{61AFEF}
\definecolor{rolecolasst}{HTML}{98C379}
\definecolor{prompttitle}{HTML}{3B5168}
\newcommand{\spkuser}{\textcolor{rolecoluser}{\textbf{User:}}}
\newcommand{\spkasst}{\textcolor{rolecolasst}{\textbf{Assistant:}}}

\newtcolorbox{querybox}[1][]{
  enhanced,
  colback=black!92,
  colframe=black!92,
  coltext=white,
  boxrule=0.4pt,
  arc=2pt,
  left=8pt,
  right=8pt,
  top=4pt,
  bottom=4pt,
  fontupper=\ttfamily\small,
  before upper={\textcolor{green!75!black}{>~}},
  #1
}

\newtcblisting{compactionbox}[1][]{
  enhanced,
  breakable,
  colback=black!92,
  colframe=black!92,
  coltext=white,
  boxrule=0.4pt,
  arc=2pt,
  left=8pt,
  right=8pt,
  top=4pt,
  bottom=4pt,
  fonttitle=\small\ttfamily\bfseries,
  coltitle=white,
  colbacktitle=prompttitle,
  attach boxed title to top left={xshift=6pt, yshift=-2pt},
  boxed title style={colback=prompttitle, frame hidden, sharp corners=south, top=2pt, bottom=2pt, left=4pt, right=4pt},
  listing only,
  listing options={
    basicstyle=\footnotesize\ttfamily\color{white},
    breaklines=true,
    breakindent=0pt,
    columns=fullflexible,
    keepspaces=true,
    upquote=true,
  },
  #1
}

\theoremstyle{plain}
\newtheorem{theorem}{Theorem}

\newtheorem{corollary}[theorem]{Corollary}

\newtheoremstyle{defnstyle}%
  {\topsep}{\topsep}%
  {}{}%
  {\bfseries}{.}%
  { }%
  {\thmname{#1}\thmnumber{ #2}\thmnote{ (#3)}}
\theoremstyle{defnstyle}
\newtheorem{definition}{Definition}

\AtBeginEnvironment{definition}{\pushQED{\qed}}
\AtEndEnvironment{definition}{\popQED}

\newcommand{\calX}{\mathcal{X}}

\newcommand{\calQ}{\mathcal{Q}}

\newcommand{\calI}{\mathcal{I}}
\newcommand{\calT}{\mathcal{T}}

\newcommand{\err}{\mathrm{err}}
\newcommand{\val}{\mathrm{val}}

\title{Context Compaction Theory}

\author{%
  Hayder Tirmazi \\
  AllSpice Inc. and Boston University \\
  \texttt{hayder@bu.edu} \\
  \And
  Sam Markelon \\
  Proof Trading \\
  \texttt{sam@prooftrading.com} \\
  \AND
  Allison Bishop \\
  Proof Trading and City College of New York \\
  \texttt{abishop@ccny.cuny.edu} \\
  \And
  Michael Mitzenmacher \\
  Harvard University \\
  \texttt{michaelm@eecs.harvard.edu} \\
}

\begin{document}

\raggedbottom

\maketitle

\begin{abstract}
Large Language Models (LLMs) have a bounded context window. The context window is the maximum input size an LLM can consume for a single inference. AI agents rely on a process called context compaction to fit their state within the context window when calling an LLM. Despite its ubiquity, context compaction has received essentially
no formal analysis. In this paper, we initiate a formal study of context compaction. We first introduce a framework consisting of two games that capture the two algorithmic strategies for context compaction used by contemporary AI agents in practice. The Context Selection Game models context compaction algorithms that select a subset of an agent's accumulated state to retain. The Context Generation
Game models context compaction algorithms that summarize an agent's state by an arbitrary message of bounded
length. We then prove an equivalence between the Context
Generation Game and one-way communication complexity. The minimum context compaction budget for answering a set of queries within a target error is equal to the one-way communication complexity of the induced communication problem at the same error. Known bounds from communication complexity therefore transfer directly to context compaction. We also show that the Context Selection Game corresponds to a restricted class of one-way communication protocols. Any gap between selection and generation is therefore a gap between two classes of communication protocols. We prove that there exists a set of queries for which generation needs strictly less budget than selection. The equivalence between the Context Generation Game and one-way communication also lets us measure how well a deployed context compaction algorithm performs on a query relative to the optimal strategy. For example, we
develop this comparison on set membership queries, where a Bloom filter
is a near-optimal context compaction algorithm.
We present a case study that evaluates Anthropic's context compaction endpoint on set membership queries. In our case study, the endpoint answers membership queries with a substantially higher error rate than a Bloom filter of the same size.
\end{abstract}


\section{Introduction}

AI agents are increasingly used to assist with complex tasks such as
writing code~\citep{anthropic2025claudecode, google2025geminicli, cursor2024cursor, stripe2026minions},
debugging systems~\citep{stripe2026minions}, and
conducting research~\citep{schmidgall2025agent, zhu2026paperbanana, anthropic2025multiagent}.
An agent is an application that accomplishes a task by repeatedly calling a Large Language Model (LLM) and executing the actions that the LLM requests. Over the course of a task, the agent accumulates a state. The state includes the messages the user has sent, the responses of the LLM, and the results of actions such as reading a file.

An LLM accepts an input of bounded size, and the agent must construct this input on every call. The bound on the input size is called the LLM's context window. When the agent's state is small, the agent sends its entire state to the LLM. When the agent's state grows beyond the context window, the agent must construct a smaller input that preserves the information the LLM will need. This process is called context compaction~\citep{anthropic2025compactioncookbook, claude2026compaction}. Figure~\ref{fig:agent-llm} shows the relationship between the agent, its internal state, and the LLM. The constraint that drives context compaction is the bounded input of the LLM. The size of the agent's memory is not the constraint. Agents also face a second, softer pressure. The reasoning quality of an LLM degrades as its input grows. This phenomenon is called context rot~\citep{zhang2025recursive}. Context rot pushes agents to do context compaction even before their state reaches the context window. This smaller usable limit is the effective context window.

\begin{figure}[t]
\centering
\includegraphics[width=0.85\linewidth]{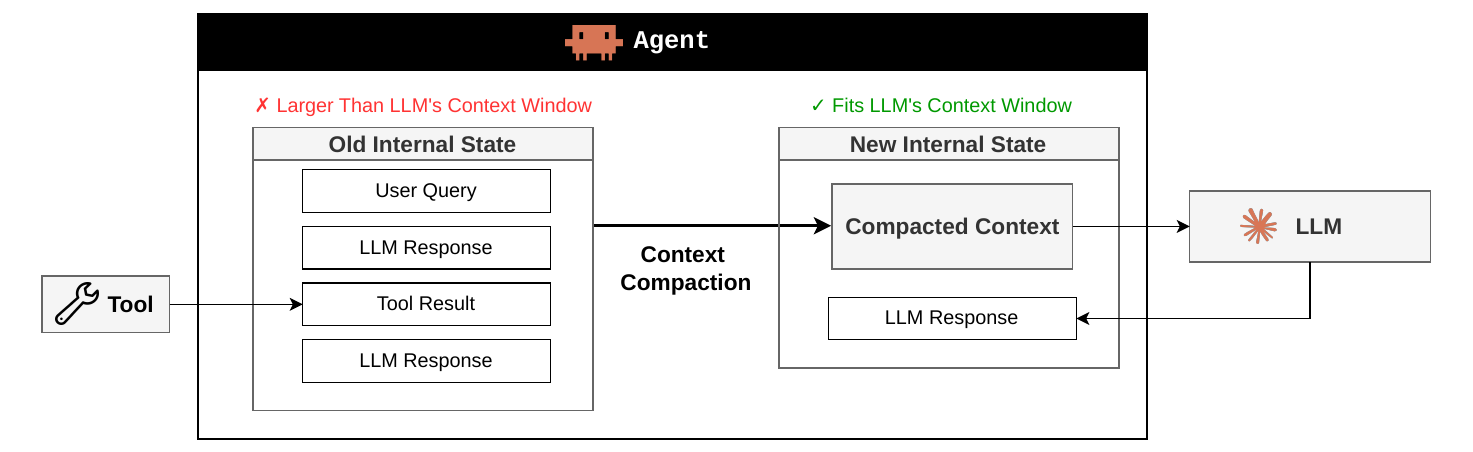}
\caption{Context compaction in an agent. The agent stores an internal state that grows as it works. The internal state holds the user's queries, the LLM's responses, and the results of tool calls. The LLM has a bounded input size called the context window. When the internal state exceeds the context window, the agent runs a context compaction algorithm that produces a compacted context. This compacted context, together with the LLM's next response, becomes the agent's new internal state, which fits within the context window. The agent carries this new internal state forward.}
\label{fig:agent-llm}
\end{figure}

During context compaction, an agent summarizes its state or selects a subset of it.
Systems such as Claude Code~\citep{anthropic2025claudecode}, Cursor~\citep{cursor2024cursor}, and Codex~\citep{openai2025codex} all implement variants of this strategy,
typically triggered when context usage exceeds a threshold, e.g., 95\% of the
allowed context window. Context compaction itself introduces a new failure
mode. After a context compaction, the agent replaces its internal state 
with the compacted context and only sends the compacted context 
to the LLM in later steps of the session. Therefore, information a context compaction discards no longer reaches the LLM in future steps of the session. This can cause the agent to make decisions
inconsistent with the original conversation. Note that the agent can and
does maintain a record of the full history in storage. However, the agent does not act on the full
history going forward, because working from the full history
instead of replacing it with the compacted context would require
re-running a context compaction algorithm over the full history on every
query. This alternative is slow and costly, as we show in
Section~\ref{subsec:agents}. Despite its ubiquity, context compaction has received essentially
no formal analysis. Existing work on context compaction mostly focuses on
empirical evaluations~\citep{jiang2023llmlingua, kang2025acon, li2026latent} and
does not provide formal guarantees on what information is preserved or lost.

In this paper, we initiate a formal study of context compaction with the following contributions.

\paragraph{A theoretical framework for context compaction.}
In Section~\ref{sec:model}, we introduce two games for analyzing context
compaction. The Context Selection Game formalizes context compaction by
selection, in which a selector chooses a subset of items to retain subject
to a budget constraint, and an adversary then issues a query that must be
answered using only the selected subset. The Context Generation Game extends the framework to
context compaction by generation, in which the context compaction algorithm emits an
arbitrary message of bounded length, and the query must be answered from
that message. Both games model a single invocation of context compaction. An agent session invokes context compaction many times. We relate the single-invocation model to the full agent loop in Section~\ref{subsec:agents}. We classify production context compaction algorithms used by Codex,
Claude Code, Gemini CLI, and OpenCode as instances of these two games.

\paragraph{Generation as one-way communication.}
In Section~\ref{sec:communication}, we prove an equivalence between the Context Generation
Game and one-way communication (Theorem~\ref{thm:gen-equiv}).
More precisely, we show that for any set of queries, the minimum budget at which a context generation algorithm
achieves error $\leq \epsilon$ equals the one-way randomized
communication complexity of the induced communication problem at error
$\epsilon$. The equivalence lets us import communication-complexity results into
the context compaction setting. We also show
that the Context Selection Game (Corollary~\ref{cor:select-restricted}) corresponds
to a restricted protocol class within one-way communication. We exhibit a family of queries on which selection needs a factor of $\Theta(\log n)$ more budget than generation to reach the same error (Theorem~\ref{prop:select-gen-gap}).

\paragraph{A case study of deployed context compaction.}
In Appendix~\ref{app:empirical}, we present a case study that uses our equivalence to measure a context compaction endpoint deployed in production. We record a set of $15{,}000$ items in the context of Anthropic's context compaction endpoint. We tell the endpoint that its result will only be used for set membership queries. We then ask membership queries and compare the error rate to a Bloom filter of the same compacted size. Note that a Bloom filter is an optimal context compaction algorithm for set membership up to a constant factor. We find that the endpoint answers membership queries with an error rate close to a random guess. We show using a control experiment that information lost during context compaction causes this error.

\subsection{Related Work}
\label{subsec:related-work}

Context compaction relates to several lines of work on managing what an LLM reads and remembers, most of them empirical or systems-oriented. We review the most relevant below and, in each case, note how it differs from our formal study of what context compaction must retain.

\paragraph{Context Compaction}
Most prior work on context compaction is empirical. Prior work evaluates prompt compression
methods~\citep{jiang2023llmlingua, jiang2024llmlingua2} and context compaction methods in agents~\citep{kang2025acon, li2026latent} by downstream task accuracy. These methods do not provide formal guarantees on what
information remains in an agent's compacted context. Context compaction methods in production agents such as Claude
Code~\citep{anthropic2025claudecode}, Codex~\citep{openai2025codex}, Gemini
CLI~\citep{google2025geminicli}, and OpenCode~\citep{opencode2026} 
are typically based on LLM summarization and are also largely guided by empirical evaluation.

\paragraph{Inference-level Context Management}
A separate line of work acts inside the LLM's inference engine. This includes KV-cache eviction methods such as H2O~\citep{zhang2023h2o} and StreamingLLM~\citep{xiao2024streamingllm} that drop low-value tokens from the attention cache so that a long context fits in fixed memory. \citet{liu2023lost} show that LLMs use long context unevenly and accuracy drops when the relevant information sits in the middle of a long context. These inference-level methods are not available to most agent developers for three reasons. Firstly, agents often access an underlying LLM only through externally hosted inference providers such as AWS Bedrock~\citep{awsbedrock}. Secondly, the LLM an agent uses may be proprietary, making it difficult to modify its inference implementation. Finally, modern agents often switch among several LLMs, using them as interchangeable backends independent of the agent's design~\citep{shahout2026orlalibraryservingllmbased}. The primary lever an agent developer controls is the context it constructs and sends to the LLM. Agents therefore rely on context compaction. Note that LLM inference-level context management methods are complementary to context compaction methods, and an agent takes advantage of both.

\paragraph{LLMs and Set Membership}
Recent work studies how reliably LLMs answer set membership questions~\citep{hergert2025brittleness, guo2026hallucination}.
\citet{hergert2025brittleness} find that instruction-tuned LLMs are brittle even
on explicit membership queries. Closest to our setting, \citet{guo2026hallucination} show that an LLM storing
items in bounded memory trades off forgetting against hallucinating similar to
a Bloom filter trading space against false positives. \citet{guo2026hallucination} study the memory an LLM
acquires during training. Our work, on the other hand, studies context
compaction at inference time. This is relevant in the agentic setting because an
agent's task usually depends on information the LLM never saw in training,
such as the current codebase, the files it has edited, and its tool outputs. This
information exists only in the context that the agent sends to the LLM.
Context compaction shrinks that context, and therefore determines how much of this
information the LLM can use.
Approximate membership is also a popular benchmark for learned and neural
data structures~\citep{kraska2018case, mitzenmacher2018model, rae2019metalearning}.

\section{Preliminaries}
\label{sec:prelim}

In this section, we provide an overview of how production agents work
and why context compaction is necessary for long-running agentic
sessions. We use OpenAI's Codex~\citep{openai2025codex} v0.125.0
as our primary reference agent for this discussion. Codex is one of the most popular agents that is also open-source, which allows us to analyze its implementation in detail. However, Codex is not unique in its design. The vast majority of popular agents today including Claude Code~\citep{anthropic2025claudecode}, OpenCode~\citep{opencode2026}, Block's Goose~\citep{block2025goose}, and Gemini CLI~\citep{google2025geminicli} implement the same core agent loop design as Codex. Therefore, our analysis of Codex's design and context compaction mechanism is representative of a broad class of popular agents. We mention other agents including Claude Code, OpenCode, and Gemini CLI only when their designs differ in meaningful and informative ways.

The remainder of this section is organized as follows. Section~\ref{subsec:llms-tools} introduces LLMs and the
tool-calling interface used by modern agents.
Section~\ref{subsec:agents} describes a canonical agent loop and the role of
context compaction within it.
Section~\ref{subsec:example} works through a concrete example agent session
using Codex.
Section~\ref{subsec:existing-compaction} surveys the context compaction
algorithms used by Codex, Claude Code, Gemini CLI, OpenCode, and others.

\subsection{LLMs and Tools}
\label{subsec:llms-tools}

Tokens are the basic units of text that LLMs process. They are usually words or subword units within a fixed vocabulary. An LLM is a function that takes a sequence of tokens as input and produces another sequence of tokens as output. This process is called inference. Agents build on LLM inference to perform specific tasks. For a given LLM inference, the input sequence of tokens is called the context of the LLM for that inference. Every LLM has a hard architectural limit on the number of input tokens it can process during a single inference. This hard limit on the input token count is called the LLM's context window. We denote this limit by $W$. Inputs longer than $W$ tokens cannot be processed by the LLM during a single inference. In the rest of this paper, when we refer to context, we mean specifically the input token sequence sent to the LLM during inference. This is the standard way in which practitioners use the term context in the agentic setting.

In addition to the hard context window limit $W$, recent work has documented a phenomenon known as context rot~\citep{zhang2025recursive} which degrades the reasoning quality of LLMs as the context length grows. Due to context rot, there is often a cutoff point even smaller than the hard limit $W$ where the LLM's reasoning quality degrades in a way that is no longer desirable. This cutoff point is the effective context window, often much smaller than $W$. This means context window size is often a concern for agents even for sessions that are short enough to fit within the hard limit $W$.

Modern LLMs also allow agents to specify custom tools. The agent provides these tools. Its developer, its user, or a third-party developer can write them. Agents typically use tools to perform tasks outside the LLM's capabilities, such as file reads, shell commands, and web search. In some systems, the tools might even be other LLMs~\citep{zhang2025recursive} or other agents~\citep{langgraph}. The agent registers a set of tool schemas which are typically structured JSON definitions describing each tool's name, parameters, and behavior. The agent sends the tool schemas to the LLM as context during inference. When the LLM wants to use a tool, instead of producing plain text it produces a structured tool call request containing the tool name and its arguments. The agent receives these tool call requests. It executes the tool and includes the result in its next input to the LLM.

An agent can also register a system prompt. The system prompt is a static block of instructions describing the LLM's role, operating rules, and any task-specific guidance. The agent prepends the system prompt to the context of every LLM inference in a session. As a concrete example, the first few lines of Codex's system prompt for GPT-5 read:
\begin{promptbox}[title=An excerpt from Codex's system prompt for GPT-5]
\small\ttfamily
You are Codex, based on GPT-5. You are running as a coding agent in the Codex CLI on a user's computer.

\#\# General

- When searching for text or files, prefer using `rg` or `rg --files` respectively because `rg` is much faster than alternatives like `grep`. (If the `rg` command is not found, then use alternatives.)
\end{promptbox}
The full Codex system prompt continues for several hundred lines, covering editing constraints, planning, sandboxing, and other operating rules.

\subsection{Agents}
\label{subsec:agents}

As mentioned above, an agent is an application that relies on LLM inference. An agent coordinates three core operations. The first operation is invoking an LLM. The second operation is executing one or more tools. The third operation is updating its own internal state. The internal state of an agent at iteration $t$, denoted $\calI_t$, is the working state the agent carries about its task and the world. It includes 1) messages sent to the agent, 2) the LLM's responses, 3) tool calls and their results, and 4) anything programmatically tracked, computed, or retrieved by the agent's own code that does not fall into the other categories. To call the LLM, the agent must present its internal state as an input of at most $W$ tokens, where $W$ is the size of the LLM's context window. When $\calI_t$ fits within the context window, the agent sends it to the LLM directly. When $\calI_t$ exceeds the context window, the agent invokes a context compaction algorithm that replaces $\calI_t$ with a bounded internal state that encodes the relevant information and satisfies $|\calI_t| \leq W$. Note that the agent's own memory is not the constraint here. The agent could store everything it has gathered. In fact, agents often store the full transcript of their session. The constraint is that the LLM that the agent uses accepts at most $W$ tokens of input. An alternative would be for the agent to recompute a summary from the stored session history on every call. However, this design balloons the agent's cost per user query in terms of LLM usage as well as the time taken to serve each user query. Agents commonly do context compaction by asking the LLM to summarize the history. This method is called LLM-based summarization. Consider a coding session with $800{,}000$ tokens of history. Claude Fable~5 has a context window of $1{,}000{,}000$ tokens~\citep{anthropic2026pricing}. This history fills most of Claude Fable~5's context window. Suppose the agent re-derives a $100{,}000$-token summary from the history with LLM-based summarization. The summarization call reads $800{,}000$ input tokens and generates $100{,}000$ output tokens. Fable~5 charges \$10 per million input tokens and \$50 per million output tokens~\citep{anthropic2026pricing}. The call therefore costs \$8 in input and \$5 in output. That adds up to \$13 for a single query. Fable~5 generates about $65$ output tokens per second~\citep{artificialanalysis2026fable}. That is about $15$ milliseconds per output token. Generating the $100{,}000$-token summary alone takes about $26$ minutes. Reading the $800{,}000$-token input adds further delay. Note that prompt caching~\citep{anthropic2026promptcaching} can potentially reduce the input cost in our calculation, but even then the output cost is unaffected. As the history grows over the session, this cost and delay increase monotonically. Most deployed agents therefore feed only the compacted state to the LLM in later steps of the agent's session, even though they keep the raw history in storage. This is the context compaction problem we study in this paper.

An agent operates in two nested loops. The outer loop processes user
messages over the course of a session. The inner loop processes a
single user message via repeated LLM calls and tool executions, until
the LLM produces a response with no further tool calls. The initial
internal state is $\calI = \{P\} \cup \calT$, where $P$ is the system
prompt and $\calT$ is the set of tool schemas. The full loop is given
in Algorithm~\ref{alg:agent-loop}.

\begin{figure}[t]
\begin{minipage}[t]{0.56\linewidth}
\begin{algorithm}[H]
\caption{An agent's main loop}
\label{alg:agent-loop}
\begin{algorithmic}[1]
\State $\calI \gets \{P\} \cup \calT$
\Loop
    \State $u \gets \textsc{ReceiveUserMessage}()$
    \State $\calI \gets \calI \cup \{u\}$
    \Repeat
        \State $\calI \gets \textsc{Compact}(\calI)$
        \State $(R, T) \gets \text{LLM}(\calI)$
        \State $\calI \gets \calI \cup \{R\}$
        \If{$T \neq \emptyset$}
            \State $E \gets \{\textsc{Execute}(\tau) : \tau \in T\}$
            \State $\calI \gets \calI \cup E$
        \EndIf
    \Until{$T = \emptyset$}
    \State Show $R$ to the user
\EndLoop
\end{algorithmic}
\end{algorithm}
\end{minipage}\hfill
\begin{minipage}[t]{0.40\linewidth}
\begin{algorithm}[H]
\caption{Single invocation setting}
\label{alg:single-compaction}
\begin{algorithmic}[1]
\State $\calI \gets$ the agent's internal state
\State $\calI \gets \textsc{Compact}(\calI)$
\State $q \gets$ a future user query
\State $R \gets \text{LLM}(\calI \cup \{q\})$
\end{algorithmic}
\end{algorithm}
\end{minipage}
\end{figure}

Line 1 initializes the internal state $\calI$ with the system prompt
$P$ and the tool schemas $\calT$. The outer loop on lines 2--15
iterates over user messages received during the session, and the inner
loop on lines 5--13 processes a single user message via repeated LLM
inferences and tool executions. On each iteration of the outer loop,
line 3 waits for a new user message $u$ and line 4 adds the message to the
agent's internal state. On each iteration of the inner loop, line 6 invokes
$\textsc{Compact}$ to fit the internal state within the context window, so that
$|\calI| \leq W$.
Then line 7 conducts LLM inference on the internal state to obtain a response $R$ and a possibly
empty set $T = \{\tau_1, \ldots, \tau_n\}$ of tool call requests. Line 8 adds the response $R$ to the internal state. If $T$ is non-empty,
lines 10 and 11 execute the tool calls requested by the LLM and add their results $E$ to
the agent's internal state. The results of the tool calls are then communicated to the LLM during the next iteration of the inner loop. The inner loop terminates when the LLM produces a
response with no tool calls. Line 14 then shows that response to the
user, and the outer loop waits for the next user message. Note that the inner loop is often referred to as a reasoning and acting (ReAct) loop~\citep{yao2022react,ibmreact}.

The $\textsc{Compact}$ routine on line 6 is the focus of this paper.
The $\textsc{Compact}$ routine reduces the internal state to at most
$W$ tokens. When the agent's internal state already fits within the context
window, the agent can pass it through to the LLM unchanged. When the
agent's internal state exceeds $W$ tokens, the agent must perform context
compaction. The tool results $E$ added to the internal state on line
11 may include file contents, command output, error messages, and
other tool-specific outputs. Between context compactions, the internal state grows as
the agent appends responses and tool results. Each invocation of the context compaction algorithm 
resets the internal state to a context of at most $W$ tokens. A single user message can trigger
dozens of inner iterations, and a session can contain many user
messages.

In Algorithm~\ref{alg:agent-loop}, the agent carries the compacted context
forward. After a context compaction, the agent uses the compacted context, the new
response, and any new tool results as its new internal state.
When the agent does context compaction again, it operates on a state that already contains the output
of earlier context compactions. This matches deployed agents including Codex~\citep{openai2025codex}, Claude
Code~\citep{anthropic2025claudecode}, Gemini CLI~\citep{google2025geminicli},
OpenCode~\citep{opencode2026}, Goose~\citep{block2025goose}, and
Cursor~\citep{cursor2024cursor}. All of these agents feed the compacted context to the LLM in
place of the full history, and each later context compaction builds on the previous
summary. We discuss more details on context compaction mechanisms in existing agents in
Section~\ref{subsec:existing-compaction}. As we explained above, recomputing a fresh context from the full history on every turn would be expensive. This cost is why deployed agents carry the compacted context forward.

Our model captures a single invocation of context compaction, shown as
Algorithm~\ref{alg:single-compaction} beside the full loop. The agent does context compaction on
its internal state, producing a bounded context. A query then arrives and must be
answered from that context alone. Since the context compaction is committed before the
query arrives, the context compaction algorithm cannot depend on the query. We study a
single invocation because it is the necessary first step toward understanding a
full agent session. An agent does context compaction many times. Each context compaction runs on the output of an earlier
one and can only discard more information. A single invocation is therefore the
base case. It is also the most favorable case for the agent. If a single
context compaction cannot preserve an answer, neither can a session of many context compactions.
Repeated context compaction is a natural extension of this setting. We leave its analysis
as an open problem in Section~\ref{sec:discussion}.

In the case of OpenAI's Codex specifically, the agent loop is organized into ``turns''. The vast majority of currently popular agents including Claude Code~\citep{anthropic2025claudecode}, OpenCode~\citep{opencode2026}, Block's Goose~\citep{block2025goose}, and Gemini CLI~\citep{google2025geminicli} implement the agent loop in a similar manner, though with different data structures and implementation details.

\subsection{Example}
\label{subsec:example}

In this section, we provide a concrete example of an agent, in our case OpenAI's Codex, operating in its main loop. Consider a session where Codex is given the following user message $u_0$.

\begin{querybox}
Can you fix the TypeError in main.py?
\end{querybox}

where \texttt{main.py} is a Python file that contains a function that raises a \texttt{TypeError}. We assume the system prompt $P$ is the default system prompt for Codex's GPT-5 model and $\calT$ is the default set of tool schemas. We use two of Codex's built-in tools. The first tool is \texttt{shell} which executes a shell command and returns its standard output. The second tool is \texttt{apply\_patch} which edits one or more files by applying a patch.

\medskip\noindent The agent loop proceeds as follows.

\begin{description}
\setlength\itemsep{0.4em}
\item[Iteration 0.] The agent sends its internal state $\calI_0 = \{u_0, P\}\, \cup \,\calT$ to the LLM. The LLM responds with $R_0 = $ ``Let me read the file'' and a single tool call $T_0 = \{\texttt{shell(``cat main.py'')}\}$. The agent executes the tool and obtains $E_0$, the contents of \texttt{main.py}. The agent's internal state becomes $\calI_1 = \calI_0 \cup \{R_0\} \cup E_0$.
\item[Iteration 1.] The agent sends $\calI_1$. The LLM responds with $R_1 = $ ``I see the issue on line 42'' and a tool call $T_1 = \{\texttt{apply\_patch(\ldots)}\}$ patching \texttt{main.py}. After execution, $\calI_2 = \calI_1 \cup \{R_1\} \cup E_1$ where $E_1$ is the patch confirmation.
\item[Iteration 2.] The agent sends $\calI_2$. The LLM responds with $R_2 = $ ``Let me verify by running the tests'' and a tool call $T_2 = \{\texttt{shell(``pytest'')}\}$. The pytest output is large, so $E_2$ is large. After the update, $|\calI_3|$ exceeds the effective context window, the threshold at which the agent triggers context compaction.
\item[Iteration 3.] The agent therefore does context compaction on $\calI_3$ so that it fits within the context window. The agent's choice of what to keep determines whether it retains the bug fix decision $R_1$, the test failure details $E_2$, or the original user goal $u_0$.
\end{description}

In this example, Iteration 3 is the first iteration at which the agent has to perform context compaction. The agent must choose what information to retain and what information to discard in order to fit the context window constraint. If the agent retains $R_1$ but discards $E_2$, then the agent may forget the test failure details and fail to fix the bug. If the agent retains $E_2$ but discards $R_1$, then the agent may forget that it already decided on a fix and fail to apply it. If the agent retains $u_0$ but discards both $R_1$ and $E_2$, then the agent may forget both the bug fix decision $R_1$ and the test failure details $E_2$, and fail to make any progress on fixing the bug. This example illustrates how context compaction can lead to loss of information that is relevant for future reasoning steps, which motivates our formal analysis of context compaction in this paper.

\subsection{Existing Context Compaction Mechanisms}
\label{subsec:existing-compaction}

OpenAI's Codex maintains a per-model token threshold that sits below the context window. When token usage crosses this threshold, an automatic context compaction task runs while the conversation still fits within the context window. The context compaction task assembles the entire conversation history together with a fixed context compaction prompt and sends it to the LLM. The prompt instructs the LLM to produce a summary of the existing conversation history. The LLM's summary then replaces the conversation history for future iterations of the agent loop. This is LLM-based summarization. After context compaction, Codex warns the user that ``long threads and multiple compactions can cause the model to be less accurate,''~\citep{openai2025codex} explicitly acknowledging that repeated context compaction degrades quality. Note that Codex's context compaction mechanism is independent of the underlying LLM. It is identical whether the LLM is a frontier LLM accessed via an API or a small open-weight LLM running locally. Therefore, the context compaction problem is a property of the agent design rather than of any particular LLM provider.

The vast majority of popular agents today make use of LLM-based summarization in the same way as Codex. Some agents use it as their only strategy, similarly to Codex. For example, Gemini CLI~\citep{google2025geminicli} triggers context compaction at a configurable percentage of the context window and instructs the LLM to produce a structured XML snapshot. Other agents combine LLM-based summarization with a set of other context compaction strategies. Claude Code uses a tiered eviction strategy that tries cheap alternatives such as inline compression of large tool outputs before invoking LLM-based summarization~\citep{anthropic2025compactioncookbook, claude2026compaction}. OpenCode~\citep{opencode2026} can additionally prune the outputs of old tool calls. This pruning is separate from its LLM-based summarization, which OpenCode invokes when the conversation exceeds the usable context window.

A large number of context compaction algorithms also use some form of selection instead of LLM-based summarization. The OpenAI Assistants API~\citep{openai2024assistants} offers a truncation strategy parameter with a \texttt{last\_messages} mode that keeps the $N$ most recent messages and an \texttt{auto} mode that drops middle messages to fit a token budget. LangChain's \texttt{trim\_messages} utility~\citep{langchain2024trim} lets users manage context by dropping messages from a conversation according to a configurable strategy, e.g., dropping the oldest messages. Aider's repo map~\citep{gauthier2026aider} ranks code symbols by a PageRank-style importance score and selects the top-$k$ that fit a budget. Token-level prompt compressors such as LLMLingua~\citep{jiang2023llmlingua} and Selective Context~\citep{li2023selective} delete low-information tokens, producing a compressed prompt that is a sub-sequence of the original tokens. 

\section{Models for Context Compaction}
\label{sec:model}

In this section, we formalize context compaction as a two-player game between the context compaction algorithm and an adversary that represents the
future information needs of an agent. We define two games. They differ in the kind of context compaction they model. Our formalization treats the agent's
context as a set of discrete \emph{items} instead of a sequence of tokens.
The items are atomic units of information such as variable names, error messages,
design decisions, or instructions. This abstraction lets us reason about what
information is preserved or lost, independent of how the underlying LLM processes tokens.
Both games model a single invocation of context compaction. Section~\ref{subsec:game} introduces the Context Selection Game, which
captures context compaction algorithms that select a subset of an agent's context.
Section~\ref{subsec:generation} introduces context compaction by generation as
a superclass of context compaction by selection and formalizes the Context Generation
Game. Finally, Section~\ref{subsec:classification} classifies existing
context compaction algorithms used in production agents within our framework.

\subsection{Context Selection Game}
\label{subsec:game}

The Context Selection Game models context compaction algorithms that select a subset of an agent's context. The game is played between a selector and an adversary. A
selector receives an item universe $\calX$ chosen by the adversary. In real applications, $\calX$ is extracted from the agent's
conversation history. The selector must then choose a subset $S \subseteq \calX$ to retain. The subset $S$ is subject to a size budget that captures the LLM's context window
limit. The adversary then issues a query $q$ from a given query space $\calQ$,
and the selector's score is determined by how well the retained subset
$S$ answers $q$ under a value function $v_q$. The formal definition
follows.

\begin{definition}[Context Selection Game]
\label{def:game}
An instance of the Context Selection Game consists of an item universe $\calX$, a size function $s$, a budget $B$, a query space $\calQ$, and a query regime.
The item universe $\calX = \{x_1, \ldots, x_N\}$ represents atomic units of information extracted from a conversation history.
Each item $x_i$ has a size $s(x_i) > 0$ representing the amount of context window space it occupies.
The budget $B > 0$ is the maximum total size of items that can be retained within the context window of an LLM.
Each query $q \in \calQ$ has a value function $v_q: 2^{\calX} \to [0, 1]$ specifying how well a retained subset answers $q$.
The query regime specifies how queries are chosen (see Definition~\ref{def:regimes}).

The game is played between two players, a selector, i.e., the context compaction algorithm, and an adversary, which represents the future information needs of an agent. The game proceeds in three stages. First, the adversary fixes the item universe $\calX = \{x_1, \ldots, x_N\}$ and the size of each item. Second, the selector observes $\calX$ and chooses a subset $S \subseteq \calX$ with $\sum_{x_i \in S} s(x_i) \leq B$. Third, a query $q \in \calQ$ is chosen according to the query regime. The selector is then awarded a score $\val(S, q) := v_q(S)$ and incurs error $\err(S, q) := 1 - \val(S, q)$, which it seeks to minimize. Note that the budget $B$, the query space $\calQ$, and the query regime are fixed before the game starts and are known to both players.
\end{definition}

The difficulty of context compaction depends on what the selector knows about future queries. 
If query patterns are predictable, e.g., users may tend to ask about recent instructions, 
the selector can prioritize accordingly. If queries are adversarial, such as a
red-teamer probing for information the agent forgot, the selector must hedge.
We formalize this as two query regimes.

\begin{definition}[Query regimes]
\label{def:regimes}
We study two query regimes. In the stochastic query regime, the item universe and the query are drawn together from a known distribution $\mu$ over pairs $(\calX, q)$. This draw replaces the adversarial choice of the item universe in the first stage. The selector observes $\calX$ but not $q$. Its error is the expected error under $\mu$. In the oblivious adversary query regime, the adversary fixes the item universe as in the first stage and then chooses the query $q$ with no access to the subset $S$ chosen by the selector. Its error is the worst case of its error over all inputs $(\calX, q)$.
\end{definition}

Note that the stochastic query regime and the oblivious adversary query regime differ in the quantity we are trying to find bounds for. A result that assumes the stochastic regime guarantees a bound on the expected error, where the expectation is over both the draw of $(\calX, q)$ from $\mu$ and the selector's internal randomness. On the other hand, a result that assumes the oblivious adversary regime bounds the error on every input, where the expectation is over the selector's internal randomness alone. The minimum budget in the oblivious adversary regime, at any target error, is therefore at least the minimum budget in the stochastic regime at the same target error, for every distribution $\mu$. For example, a context compaction algorithm designed for a distribution $\mu$ may achieve a low error in expectation and still have a high error on an input that $\mu$ draws rarely. An oblivious adversary can choose that particular bad input.

The choice of what counts as an item is a modeling decision about selection granularity. At the coarsest granularity, a single tool result or the response of a single LLM inference, is one item. At the finest granularity, every token in the internal state of an agent is its own item. Our framework is agnostic to the choice of granularity. Given any partition of the internal state $\calI_t$ into items, the Context Selection Game asks which subset of those items to retain. 
A separate, orthogonal problem to context selection is lossless compression, i.e., encoding an item in fewer tokens without discarding any of its information. Lossless compression applies to each item on its own. Throughout this paper we treat the encoding of an item into tokens as a black box. We measure the size of an item $x_i$ by the size function $s(x_i)$, which accounts for any lossless compression of that item. Sizes are additive across the retained set. The retained set therefore has size $\sum_{x_i \in S} s(x_i)$.

The subset view defines a class of context compaction algorithms. A context compaction algorithm is a \emph{selection algorithm} if its output identifies an \emph{unordered} subset $S \subseteq \calX$. Its cost is the additive $\sum_{x_i \in S} s(x_i)$, where each retained item may be losslessly compressed on its own. We denote the class of selection algorithms $\mathsf{SELECT}$. The Context Selection Game models $\mathsf{SELECT}$.

\subsection{Context Generation Game}
\label{subsec:generation}

We now extend the Context Selection Game to capture context compaction by generation. Our definition of a selection algorithm in Section~\ref{subsec:game} imposes three restrictions. A selection algorithm 1) must output a subset $S \subseteq \calX$, 2) must not use the order of the retained items to convey information, and 3) has an output of size $\sum_{x_i \in S} s(x_i)$, i.e., the sum of the sizes of the retained items. A \emph{generation algorithm} drops all three restrictions. A generation algorithm 1) may emit new tokens that appear in no item of $\calX$, such as the tokens of an LLM-generated summary, 2) may reorder any retained items to convey information, and 3) may encode the whole retained set at a size below $\sum_{x_i \in S} s(x_i)$. For example, a generation algorithm can record which items of $\calX$ belong to $S$ with one bit per item of $\calX$. When $S$ contains most of $\calX$, this encoding is far smaller than $\sum_{x_i \in S} s(x_i)$. We use $\mathsf{GEN}$ to denote the class of generation algorithms. By construction $\mathsf{SELECT} \subseteq \mathsf{GEN}$. LLM-based summarization, used by Codex, Claude Code, and Gemini CLI (Section~\ref{subsec:existing-compaction}), is in $\mathsf{GEN}$ but not necessarily in $\mathsf{SELECT}$. The Context Generation Game parallels the Context Selection Game of Definition~\ref{def:game}. A generator plays against the adversary in place of a selector. The generator's strategy is a condenser--interpreter pair in place of a subset.

\begin{definition}[Context Generation Game]
\label{def:gen-game}
An instance of the Context Generation Game consists of an item universe $\calX$, a budget $B$, a token alphabet $\Sigma$, a query space $\calQ$, and a query regime. Each query $q \in \calQ$ has a value function $v_q : \mathrm{Out} \to [0, 1]$ over an output space $\mathrm{Out}$, specifying how well a candidate output answers $q$.

The game is played between two players, a generator, i.e., the context compaction algorithm together with the LLM that reads the compacted context, and an adversary, as in Definition~\ref{def:game}. The game proceeds in three stages. First, the adversary fixes the item universe $\calX$. Second, the generator applies a \emph{condenser} $\mathsf{Cond}$ that maps the item universe to a message $\mathsf{Cond}(\calX) \in \Sigma^{\leq B}$ of length at most $B$, and commits an \emph{interpreter} $\mathsf{Int}: \Sigma^{\leq B} \times \calQ \to \mathrm{Out}$. Third, a query $q \in \calQ$ is chosen according to the query regime. The generator is then awarded a score $\val(\hat a, q) := v_q(\hat a)$, where $\hat a = \mathsf{Int}(\mathsf{Cond}(\calX), q)$ is the interpreter's output.
\end{definition}

In the case of LLM-based summarization, the condenser is the summarization call and the interpreter is the future LLM call that reads the compacted context. The condenser takes the existing context as input and gives a compacted context or summary as output. The interpreter takes the summary and a query as input, and gives the answer to that query as output.

The Context Selection Game is the special case of the Context Generation Game with the following parameters. The output space is $\mathrm{Out} = 2^{\calX}$. The condenser--interpreter pair lies in $\mathsf{SELECT}$, i.e., $\mathsf{Cond}$'s output identifies a subset $S \subseteq \calX$, possibly after lossless decoding, and $\mathsf{Int}$ returns $S$. The value function $v_q$ scores how well the retained subset $S$ answers $q$. The Context Generation Game removes these restrictions. Any condenser--interpreter pair in $\mathsf{GEN}$ is admissible.

\begin{table}[t]
\centering
\small
\caption{Classification of existing context compaction algorithms by class and, for $\mathsf{SELECT}$, by granularity. Hybrid entries ($\mathsf{SELECT}+\mathsf{GEN}$) stack a $\mathsf{SELECT}$ layer with a $\mathsf{GEN}$ summarization fallback when the $\mathsf{SELECT}$ budget is exhausted; the listed granularity is that of the $\mathsf{SELECT}$ layer.}
\label{tab:classification}
\begin{tabular}{@{}lll@{}}
\toprule
\textbf{System} & \textbf{Class} & \textbf{Granularity} \\
\midrule
Codex~\citep{openai2025codex} & $\mathsf{GEN}$ & --- \\
Gemini CLI~\citep{google2025geminicli} & $\mathsf{GEN}$ & --- \\
\addlinespace
Claude Code~\citep{anthropic2025claudecode} & $\mathsf{SELECT}+\mathsf{GEN}$ & Tool result \\
OpenCode~\citep{opencode2026} & $\mathsf{SELECT}+\mathsf{GEN}$ & Tool result \\
\addlinespace
OpenAI Assistants Truncation Strategy~\citep{openai2024assistants} & $\mathsf{SELECT}$ & Message \\
LangChain \texttt{trim\_messages}~\citep{langchain2024trim} & $\mathsf{SELECT}$ & Message \\
Aider repo map~\citep{gauthier2026aider} & $\mathsf{SELECT}$ & Code symbol \\
LLMLingua~\citep{jiang2023llmlingua} & $\mathsf{SELECT}$ & Token \\
Selective Context~\citep{li2023selective} & $\mathsf{SELECT}$ & Token \\
\bottomrule
\end{tabular}
\end{table}

\subsection{Classifying Existing Context Compaction Mechanisms}
\label{subsec:classification}

Table~\ref{tab:classification} maps the production and research context compaction algorithms surveyed in Section~\ref{subsec:existing-compaction} to the classes $\mathsf{SELECT}$ and $\mathsf{GEN}$ defined in Section~\ref{subsec:generation}, and for $\mathsf{SELECT}$ algorithms records the granularity at which selection operates (Section~\ref{subsec:game}). Every algorithm in our survey lies in $\mathsf{SELECT} \cup \mathsf{GEN}$, and the $\mathsf{SELECT}$ entries span granularities from individual tokens up to entire messages. Our framework therefore covers the vast majority of context compaction algorithms used in production today.

\section{Context Compaction and Communication}
\label{sec:communication}

In this section, we observe that the Context Generation Game of
Definition~\ref{def:gen-game} can be cast directly as an instance of
one-way communication, with the condenser playing Alice and the interpreter
playing Bob. Section~\ref{subsec:induced-comm} makes this correspondence
formal: it defines the one-way communication problem induced by a Context
Generation Game and proves the two equivalent.
Section~\ref{subsec:select-as-restricted} specializes the equivalence to
the $\mathsf{SELECT}$ subclass, identifying $\mathsf{SELECT}$ algorithms as
one-way protocols of a restricted form. It also exhibits a family of queries on
which selection needs a factor of $\Theta(\log n)$ more budget than generation.
Section~\ref{subsec:computation-vs-communication} records two caveats about
the scope of the equivalence: that it is a statement about information rather
than computation, and that it is restricted to the stochastic and oblivious
regimes.

\subsection{The Induced Communication Problem}
\label{subsec:induced-comm}

We make the correspondence formal by associating each Context Generation
Game with a one-way communication problem, defined as follows.

\begin{definition}[Generation $\Rightarrow$ Communication]
\label{def:induced-comm}
Let $\mathcal{G}$ be an instance of a Context Generation Game
(Definition~\ref{def:gen-game}), with item universe $\calX$, budget $B$,
token alphabet $\Sigma$, query space $\calQ$, value functions
$\{v_q\}_{q \in \calQ}$, and output space $\mathrm{Out}$. We define the
one-way communication problem
induced by $\mathcal{G}$, $\Pi_\mathcal{G}$, as follows. $\Pi_\mathcal{G}$ 
is the two-party task in which Alice receives the item universe $\calX$
and Bob receives a query $q \in \calQ$. Alice then sends a single message 
$m \in \Sigma^{\leq B}$ to Bob. Bob then outputs an answer $\hat a \in \mathrm{Out}$
for the query $q$. The protocol's error on input $(\calX, q)$ is defined as
$1 - v_q(\hat a)$ where $v_q$ is the value function associated with query $q$.
\end{definition}

For any input distribution $\mu$ over $\calX \times \calQ$, we 
define $\mathsf{R}^{\to}_{\mu,\epsilon}(\Pi)$ to be the 
$\mu$-distributional one-way public-coin randomized 
communication complexity of $\Pi$ at error $\epsilon$. 
In other words, it is the minimum message length in bits of a protocol whose expected error under
$\mu$ is at most $\epsilon$. 
We define $\mathsf{R}^{\to}_\epsilon(\Pi)$ to be the minimum message length
of a public-coin randomized protocol whose expected error, over its public
coins, is at most $\epsilon$ on \emph{every} input $(\calX, q)$.
Note that in this section we
measure the budget for the Context Generation Game in bits and
identify the token alphabet $\Sigma = \{0, 1\}$. For $|\Sigma| > 1$, 
$B$ symbols are equivalent to $B \log|\Sigma|$ bits.

We are now ready to formalize the equivalence between generation and one-way communication.

\begin{theorem}[Generation $\equiv$ one-way communication]
\label{thm:gen-equiv}
Let $\mathcal{G}$ be a Context Generation Game and $\Pi_\mathcal{G}$ be the
one-way communication problem induced by $\mathcal{G}$ as in Definition~\ref{def:induced-comm}.
The following properties hold.

\textbf{Property 1.} For any input distribution $\mu$ over pairs $(\calX, q)$, the minimum
budget at which some (possibly randomized) $\mathsf{GEN}$ algorithm achieves
expected error $\leq \epsilon$ with the input drawn from $\mu$ equals
$\mathsf{R}^{\to}_{\mu,\epsilon}(\Pi_\mathcal{G})$. The expectation is over both
the algorithm's internal randomness and the input $(\calX, q) \sim \mu$. Note that this is
the stochastic query regime, with the conversation and the query drawn jointly
from $\mu$.

\textbf{Property 2.} In the oblivious adversary query regime, the minimum budget at which
some (possibly randomized) $\mathsf{GEN}$ algorithm achieves expected error
$\leq \epsilon$ on \emph{every} input $(\calX, q)$ equals
$\mathsf{R}^{\to}_\epsilon(\Pi_\mathcal{G})$. The expectation is over the
algorithm's internal randomness, and the bound is required on every input,
so the controlled quantity is the worst case over inputs of the expected
error.
\end{theorem}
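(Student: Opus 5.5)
The plan is to show both directions of each equality by exhibiting a syntactic, cost-preserving translation between the two objects. On one side is a (possibly randomized) condenser--interpreter pair $(\mathsf{Cond},\mathsf{Int})$ with budget $B$. On the other is a public-coin one-way protocol for $\Pi_\mathcal{G}$ with message length $B$ bits. Under this translation the error on each input $(\calX,q)$ stays the same, which gives both Property~1 and Property~2 at once. We identify $\Sigma=\{0,1\}$ as in the text, so budget $B$ means messages in $\{0,1\}^{\leq B}$.

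For the direction ``$\mathsf{GEN}$ to protocol'', we take a randomized pair $(\mathsf{Cond},\mathsf{Int})$ and let $r$ denote all internal randomness of the generator. This includes the condenser's coins and any coins the interpreter (the LLM) uses. We treat $r$ as the public coin. Alice, holding $\calX$, sends $m=\mathsf{Cond}_r(\calX)\in\{0,1\}^{\leq B}$. Bob, holding $q$ and $m$, outputs $\mathsf{Int}_r(m,q)$. On every input the output distribution is exactly that of the generator, so the error $1-v_q(\hat a)$ has the same expectation over $r$. It therefore has the same expectation over $r$ and $(\calX,q)\sim\mu$ as well.

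For the converse, we take a public-coin protocol and define the generator as follows. The generator samples the public string $r$ as shared randomness, which is committed together with the interpreter before the query arrives. The condenser is Alice's message function, and the interpreter is Bob's output function. Both are functions of $r$. The key check is admissibility. Definition~\ref{def:gen-game} places no restriction on $\mathsf{GEN}$ beyond message length and the requirement that $\mathsf{Cond}$ not see $q$. That requirement holds because Alice's message depends only on $\calX$ and $r$.

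The step needing the most care is the randomness model, which is the main obstacle. Definition~\ref{def:gen-game} speaks of a single generator, so we must argue that its internal randomness is shared between condenser and interpreter. That is, the interpreter may be committed as a random function correlated with the condenser's coins, and this is exactly public coin. We will state this convention explicitly: a randomized generator is a distribution over deterministic pairs $(\mathsf{Cond},\mathsf{Int})$. Both classes then become ``distributions over deterministic one-way protocols / pairs with messages in $\{0,1\}^{\leq B}$'', and these coincide.

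Taking minima over $B$ on both sides gives the equalities. Under $\mu$, the minimum is $\mathsf{R}^{\to}_{\mu,\epsilon}(\Pi_\mathcal{G})$. With the pointwise requirement on every $(\calX,q)$, the minimum is $\mathsf{R}^{\to}_\epsilon(\Pi_\mathcal{G})$. A minor point is that messages of length $\leq B$, rather than exactly $B$, must be allowed on both sides. The definition of $\mathsf{R}^{\to}$ as the minimum message length already permits this, possibly via a padding or self-delimiting convention applied uniformly to both sides.
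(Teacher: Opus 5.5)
Your proposal is correct and follows essentially the same route as the paper. Both arguments identify a randomized condenser--interpreter pair with a public-coin one-way protocol by letting Alice run $\mathsf{Cond}$, letting Bob run $\mathsf{Int}$, and drawing both from the same coins. The expected error then agrees on every input, and Properties 1 and 2 follow by reading this under the $\mu$-average and the worst case over inputs. Your remarks on the shared-randomness convention and on allowing messages of length $\leq B$ make explicit what the paper leaves implicit, but they do not change the argument.
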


\begin{proof}
The proof rests on the observation that a $\mathsf{GEN}$ algorithm $G$ for a Context Generation Game $\mathcal{G}$ and a one-way
communication protocol $\pi$ for the induced one-way communication problem $\Pi_\mathcal{G}$ are described by the same pair of
functions. The functions are 1) a condenser $\mathsf{Cond}$, which Alice runs to turn the item universe into a
message, and 2) an interpreter $\mathsf{Int}$, which Bob runs to turn a message and query into an
answer.

A deterministic $\mathsf{GEN}$ algorithm $G$ is a condenser $\mathsf{Cond}$ and an interpreter $\mathsf{Int}$.
The condenser turns the item universe $\calX$ into a message $\mathsf{Cond}(\calX)$ of at
most $B$ bits, and the interpreter turns that message and a query $q$ into an
answer. This is identical to a one-way communication protocol $\pi$. In $\pi$, Alice is the condenser. Alice holds $\calX$ and sends the message $\mathsf{Cond}(\calX)$. Bob is the interpreter. Bob holds $q$, receives the message $m$, and outputs $\mathsf{Int}(m, q)$. The answer
$\mathsf{Int}(\mathsf{Cond}(\calX), q)$ is identical in both $G$ and $\pi$. Therefore, the error $1 - v_q$ on each input
$(\calX, q)$ is equal. So every $\mathsf{GEN}$ algorithm $G$ of budget $B$ has a
corresponding one-way communication protocol $\pi$ of message length $B$ with the same error on every input
$(\calX, q)$, namely the protocol in which Alice runs the condenser $\mathsf{Cond}$ and Bob
runs the interpreter $\mathsf{Int}$. Conversely, every one-way communication protocol $\pi$ of message length $B$
has a corresponding $\mathsf{GEN}$ algorithm $G$ of budget $B$ with the same error on every
input, obtained by taking the condenser to be Alice's strategy and the interpreter
to be Bob's.

A randomized $\mathsf{GEN}$ algorithm $G_r$ is a distribution over deterministic
pairs $(\mathsf{Cond}_\rho, \mathsf{Int}_\rho)$, indexed by its internal coins $\rho$, since fixing the
coins makes the condenser and interpreter deterministic. Likewise a public-coin
one-way communication protocol $\pi_r$ is, by definition, a deterministic
protocol for each setting of its shared public coins. We identify $G_r$ and
$\pi_r$ by drawing both from the same coins $\rho$. For each outcome $\rho$, the
pair $(\mathsf{Cond}_\rho, \mathsf{Int}_\rho)$ is, by the deterministic case above, at once a
$\mathsf{GEN}$ algorithm of budget $B$ and a one-way protocol of message length
$B$ that make the same error on every input. Averaging over $\rho$, $G_r$ and
$\pi_r$ then have the same expected error on every input. So every randomized
$\mathsf{GEN}$ algorithm $G_r$ of budget $B$ has a public-coin protocol $\pi_r$
of message length $B$ with the same expected error on every input, and
conversely.

Both properties now follow by reading this correspondence under two different
error measures, with no further work. For property 1, the error is the expected error under $\mu$. As shown above, $G_r$ and $\pi_r$ have the same expected error on every input, so the
same expected error under $\mu$. Hence $G_r$ achieves expected error
$\leq \epsilon$ at budget $B$ if and only if $\pi_r$ does at message length $B$,
and the smallest $\mathsf{GEN}$ budget equals
$\mathsf{R}^{\to}_{\mu,\epsilon}(\Pi_\mathcal{G})$. For property 2, the error is instead the worst case over inputs of
the expected error. Since $G_r$ and $\pi_r$ have the same expected error on
every input, they have the same worst case over inputs, so the smallest
$\mathsf{GEN}$ budget for worst-case error $\leq \epsilon$ equals
$\mathsf{R}^{\to}_\epsilon(\Pi_\mathcal{G})$.
\end{proof}

\subsection{Selection as Restricted Communication}
\label{subsec:select-as-restricted}

The equivalence of Theorem~\ref{thm:gen-equiv} is for the unrestricted class
$\mathsf{GEN}$. Since $\mathsf{SELECT}$ is a subclass of $\mathsf{GEN}$, the
equivalence specializes to a restricted family of one-way protocols. The
following corollary makes the restriction explicit and identifies any gap
between $\mathsf{SELECT}$ and $\mathsf{GEN}$ as a gap within the protocol
class.

\begin{corollary}
\label{cor:select-restricted}
Let $\calX$ be an item universe of size $N$, $B$ be a budget in bits, and $s(x_i)$ be the size
function for item $x_i \in \calX$. $\mathsf{SELECT}$ algorithms correspond exactly to one-way protocols in which
Alice's message identifies a subset $S \subseteq [N]$ such that
$\sum_{i \in S} s(x_i) \leq B$, and Bob's output is a function only of
$S$, the set $\{x_i : i \in S\}$, $q$, and the public coins. In particular, every
gap between $\mathsf{SELECT}$ and $\mathsf{GEN}$ on a set of queries $\calQ$ is
a gap between subset-encoding protocols and unrestricted one-way protocols
on $\Pi_{\mathcal{G}_\calQ}$.
\end{corollary}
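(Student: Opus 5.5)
The plan is to specialize the proof of Theorem~\ref{thm:gen-equiv} to the subclass $\mathsf{SELECT}$. That proof identifies a (possibly randomized) $\mathsf{GEN}$ algorithm with a public-coin one-way protocol by letting Alice run the condenser and Bob run the interpreter, coin by coin. So it suffices to check that the defining restrictions of $\mathsf{SELECT}$ from Section~\ref{subsec:game} and Section~\ref{subsec:generation} translate exactly into the stated restrictions on the protocol. I will argue both inclusions for a fixed setting $\rho$ of the coins, and then average over $\rho$ exactly as in the theorem.

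\textbf{Forward direction.} Fix a $\mathsf{SELECT}$ algorithm with coins $\rho$. Its condenser output identifies, possibly after lossless decoding, an unordered subset $S_\rho \subseteq [N]$ with $\sum_{i\in S_\rho} s(x_i) \le B$. By the special-case description in Section~\ref{subsec:generation}, its interpreter returns $S_\rho$, or more generally scores the answer through $v_q(S_\rho)$. Hence the corresponding Alice sends a message identifying $S_\rho$, and that message obeys the budget. Bob decodes the message to the retained items $\{x_i : i\in S_\rho\}$ together with their indices. His output then depends only on $S_\rho$, those items, $q$, and $\rho$, which is exactly the restricted form.

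\textbf{Converse.} Take a protocol of the restricted form. Define the condenser to output the encoding of $S$ as the retained items, each losslessly compressed and listed in a canonical order so that order carries no information. Define the interpreter to be Bob's function. This pair satisfies the three $\mathsf{SELECT}$ restrictions: the output is a subset, order is unused, and the cost is additive. By the argument of Theorem~\ref{thm:gen-equiv}, it has the same error on every input.

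The step I expect to be the main obstacle is reconciling the two cost measures. The budget in the corollary is phrased as $\sum_{i\in S} s(x_i) \le B$, which uses the additive item-size cost rather than the raw bit length of Alice's message. I will therefore state explicitly that, for the restricted class, message cost is measured by the $\mathsf{SELECT}$ cost function. Under this accounting the message identifying $S$ is charged only the sizes of the retained items, with the identity of $S$ carried implicitly by which items appear. The correspondence is then exact by definition, rather than up to an additive $\log\binom{N}{|S|}$ term.

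\textbf{The ``in particular'' clause.} Let $\mathcal{G}_\calQ$ be the game with query set $\calQ$. By Theorem~\ref{thm:gen-equiv}, the minimum $\mathsf{GEN}$ budget equals the optimum over all one-way protocols for $\Pi_{\mathcal{G}_\calQ}$, in either query regime. By the correspondence just proved, the minimum $\mathsf{SELECT}$ budget equals the optimum over subset-encoding protocols, under the same error measure. The difference between the two minima is therefore literally the difference between these two protocol-class optima.
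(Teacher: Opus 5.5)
Your proposal is correct and follows essentially the same route as the paper's proof. Both directions specialize the condenser-as-Alice, interpreter-as-Bob identification from the proof of Theorem~\ref{thm:gen-equiv}, and the gap clause follows by applying that theorem to both classes. You add some care the paper leaves implicit: you charge restricted protocols by the additive cost $\sum_{i\in S} s(x_i)$ rather than by raw message length, you re-encode $S$ canonically in the converse, and you argue coin by coin. That additive accounting is exactly what Theorem~\ref{prop:select-gen-gap} later relies on.
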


\begin{proof}
We first show that any $\mathsf{SELECT}$ algorithm with budget $B$
corresponds to a one-way communication protocol of the restricted form described in the
corollary. Suppose $(\mathsf{Cond}, \mathsf{Int})$ is a $\mathsf{SELECT}$ algorithm with budget
$B$. By the definition of $\mathsf{SELECT}$ in
Section~\ref{subsec:generation}, the condenser $\mathsf{Cond}$ produces an output
that, after lossless decoding, identifies a subset $S \subseteq [N]$
satisfying $\sum_{i \in S} s(x_i) \leq B$. The interpreter $\mathsf{Int}$ produces an
answer that is a function only of $S$, $\{x_i : i \in S\}$, and the query
$q$. Following the construction in the proof of
Theorem~\ref{thm:gen-equiv}, we define a one-way communication protocol in which Alice,
on input $\calX$, sends $\mathsf{Cond}(\calX)$ to Bob, and Bob, on input $q$ and
message $m$, outputs $\mathsf{Int}(m, q)$. By construction, Alice's message
identifies the subset $S$ chosen by $\mathsf{Cond}(\calX)$, and Bob's output is a
function only of $S$, $\{x_i : i \in S\}$, and $q$. Conversely, suppose $\pi$ is a one-way communication protocol whose message identifies a
subset $S \subseteq [N]$ with $\sum_{i \in S} s(x_i) \leq B$, and whose
interpreter output is a function only of $S$, $\{x_i : i \in S\}$, $q$, and
the public coins. We set $\mathsf{Cond}$ to be Alice's strategy and $\mathsf{Int}$ to be Bob's
strategy. The condenser $\mathsf{Cond}$'s output identifies a subset of $\calX$ within
the budget, and the interpreter $\mathsf{Int}$ is a function only of that subset and the
query. Therefore, the condenser--interpreter pair $(\mathsf{Cond}, \mathsf{Int})$ is a $\mathsf{SELECT}$ algorithm with budget
$B$.

The gap claim follows by applying Theorem~\ref{thm:gen-equiv} to both
classes. The minimum $\mathsf{SELECT}$ budget achieving error
$\leq \epsilon$ equals the minimum length of a subset-encoding one-way
communication protocol at error $\leq \epsilon$. Similarly, the minimum $\mathsf{GEN}$ budget
at error $\leq \epsilon$ equals the minimum length of an unrestricted
one-way communication protocol at error $\leq \epsilon$. Any gap between the two minima
is therefore a gap between subset-encoding protocols and unrestricted
one-way communication protocols on $\Pi_{\mathcal{G}_\calQ}$.
\end{proof}

We now exhibit a query on which $\mathsf{SELECT}$ algorithms need strictly more budget than $\mathsf{GEN}$ algorithms.

\begin{theorem}[Selection $\subsetneq$ Generation]
\label{prop:select-gen-gap}
Let $n = 2^k$ for an integer $k \geq 2$. There is a set $Y$ of $n$ items and a single query $q$ on $Y$ with the following property: some $\mathsf{GEN}$ algorithm answers $q$ with zero error using a budget of $n$ bits, while every $\mathsf{SELECT}$ algorithm that answers $q$ with zero error requires a budget of at least $nk = n \log_2 n > n$ bits.
\end{theorem}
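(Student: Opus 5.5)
I would construct the instance explicitly and prove the two bounds separately. Take $Y = \{y_1, \ldots, y_n\}$, where each item $y_i$ is a $k$-bit string, so every item has size $s(y_i) = k = \log_2 n$ bits. The single query $q$ asks for the \emph{entire} multiset (or the indicator pattern) of the items. A cleaner choice that makes both bounds tight is this: item $y_i$ carries one bit of information $b_i \in \{0,1\}$, but it is encoded as a $k$-bit token string (for instance, the binary representation of an index paired with the bit, padded so that $s(y_i)=k$). The query $q$ asks for the vector $(b_1, \ldots, b_n)$, and $v_q$ equals $1$ exactly when the output equals this vector. I would fix the details so that the adversary's choice of item universe ranges over all $2^n$ bit vectors while each item's size stays $k$.

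For the upper bound on $\mathsf{GEN}$, the condenser writes down the $n$ bits $b_1 \cdots b_n$ in a fixed order, using positions rather than explicit indices. The interpreter reads the vector back. This gives zero error with budget exactly $n$ bits. It is a direct instance of restrictions 2) and 3) being dropped, namely encoding via order and at size below the sum of item sizes, as discussed in Section~\ref{subsec:generation}.

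For the lower bound on $\mathsf{SELECT}$, I would use Corollary~\ref{cor:select-restricted}. Bob's output can depend only on the retained subset $S$, the retained items $\{y_i : i \in S\}$, and $q$. To have zero error on every input, Bob must recover every $b_i$. I would argue that if some index $i \notin S$ for some input, then flipping $b_i$ gives a second input. I need to arrange that the selector's choice of $S$ is either unchanged or can be taken without loss of generality to be input-independent, since the selector sees $\calX$. This is the main obstacle: a selector could choose $S$ adaptively, so the \emph{choice} of subset itself leaks information.

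To block that leakage, I would make the items unordered and have their identities reveal the bits. For example, let $\calX$ be a set of $n$ distinct $k$-bit strings, so that the set itself is the information, and let the query ask to output $\calX$. Then $S$ determines Bob's view entirely, and Bob's output must equal $\calX$. Since $S \subseteq \calX$ and Bob sees only $S$, we need $S = \calX$ whenever two universes share $S$ as a subset. Indeed, if $S \subsetneq \calX$, then replacing a missing item with another string outside $\calX$ yields $\calX'$. The selector might pick $S' \neq S$ on $\calX'$, but $S$ remains a valid subset of $\calX'$. I would resolve this by counting instead: the map $\calX \mapsto S(\calX)$ must be injective on the input family, and $S(\calX) \subseteq \calX$. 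A pigeonhole or Hall-type argument over strict subsets should then show that $S = \calX$ for some input, giving worst-case cost $nk$.

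Finally, I would tune the family so that $\mathsf{GEN}$ still needs only $n$ bits: restrict the universes to those containing exactly one of $\{2j, 2j+1\}$ for each $j$, which gives $2^{n}$ possible universes when there are $2n$ strings. The GEN message then records one bit per pair. I would then check that the counting argument still forces $S = \calX$ on some input, which is where I expect the real work to lie.
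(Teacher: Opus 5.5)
Your final construction fails, and the step you defer as ``the real work'' is false. Take the $2n$ strings $p_j^0,p_j^1$ and the inputs $\calX_b=\{p_j^{b_j}: j\in[n]\}$ for $b\in\{0,1\}^n$. This input family is not closed under subsets, so a selector has far more codewords than inputs: every partial transversal is available, and there are $3^n$ of them.

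To see that this breaks the bound, join each $\calX_b$ to its $\binom{n}{n/2}$ subsets of size $n/2$. Each such subset lies in exactly $2^{n/2}$ inputs, and $\binom{n}{n/2}=\prod_{i=1}^{n/2}\frac{n/2+i}{i}\ge 2^{n/2}$. So Hall's condition holds, and there is an injective choice $\calX_b\mapsto S(\calX_b)\subsetneq\calX_b$ with $|S(\calX_b)|=n/2$. Bob inverts the matching. This gives a zero-error $\mathsf{SELECT}$ algorithm of budget at most $\frac{n}{2}(k+1)<nk$ for $k\ge 2$; your $2n$ distinct strings need $k+1$ bits, and with size $k$ it is cheaper still.

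An even simpler example: along a cyclic Gray code $g_0,\dots,g_{2^n-1}$, the selector can keep $\calX_{g_i}\cap\calX_{g_{i+1}}$. This has $n-1$ items and, since each cycle edge occurs once, determines $g_i$. So injectivity together with $S(\calX)\subseteq\calX$ does not force $S=\calX$ on this family, and the family cannot give the bound $nk$. Your intermediate variant, a set of $n$ distinct $k$-bit strings, is degenerate: there are only $2^k=n$ such strings, hence only a single input.

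The missing idea is to let $|\calX|$ vary. Give the $n$ items of $Y$ size $\log_2 n$ each, let the adversary choose \emph{any} $\calX\subseteq Y$ (including $\emptyset$), and let $q$ ask for $\calX$. There are already exactly $2^n$ inputs, so the $n$-bit indicator vector serves as the $\mathsf{GEN}$ algorithm and no pairing is needed.

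Crucially, $2^Y$ is closed under subsets, which is exactly what the transversal family lacks. A zero-error deterministic selector induces an injection $\calX\mapsto S(\calX)$ from $2^Y$ into $2^Y$, hence a bijection. Since $S(\calX)\subseteq\calX$, the only possible preimage of $Y$ is $Y$ itself. Equivalently, if $S(Y)\neq Y$, all $2^n$ inputs would map into the $2^n-1$ sets of $2^Y\setminus\{Y\}$. So on input $Y$ the selector pays $n\log_2 n$. Randomized selectors reduce to this case because zero error must hold for every setting of the coins.

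This is the paper's proof. Your worry that the adaptive choice of $S$ leaks information is well founded, and the downward-closed input family is precisely what neutralizes it.
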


Note we choose $n$ a power of two for notational convenience.

\begin{proof}
Let each item of $Y$ have size $s(x) = \log_2 n$ bits. The adversary fixes the item universe $\calX$ to be an arbitrary, possibly empty, subset of $Y$. The query $q$ asks for $\calX$: an answer is correct if and only if it equals $\calX$.

For the upper bound, the condenser sends the $n$-bit indicator vector of $\calX$ over $Y$. The interpreter returns the set that the vector indicates. This $\mathsf{GEN}$ algorithm has zero error with a budget of $n$ bits.

The intuition for the lower bound is that the full set of items $Y$ does not fit within a budget below $n \log_2 n$. Therefore, the set of choices a $\mathsf{SELECT}$ algorithm can make will not span all subsets of $Y$. Thus, a $\mathsf{SELECT}$ algorithm cannot perfectly encode an arbitrary subset of $Y$ within the budget. More formally, let $A_{S}$ be a deterministic $\mathsf{SELECT}$ algorithm with zero error and budget $B$. By Corollary~\ref{cor:select-restricted}, on input $\calX$, $A_{S}$ retains a subset $S(\calX) \subseteq \calX$ of total size at most $B$. Since the query is fixed, the answer of $A_{S}$ is $f(S(\calX))$ for a fixed function $f$. The zero error assumption ensures $f(S(\calX)) = \calX$ for every $\calX \subseteq Y$. The map $\calX \mapsto S(\calX)$ is therefore an injection of the power set of $Y$ into itself, hence a bijection. In particular, $S(\calX^\star) = Y$ for some $\calX^\star$. Since $S(\calX^\star) \subseteq \calX^\star$, we have $\calX^\star = Y$. On input $Y$, $A_{S}$ retains all $n$ items at a cost of $n \log_2 n$ bits. Hence $B \geq n \log_2 n$. If $A_{S}$ is a randomized algorithm, the error must be zero for every value of its public random coins, so the bound still holds for randomized $\mathsf{SELECT}$ algorithms.
\end{proof}

\subsection{Computation Caveats}
\label{subsec:computation-vs-communication}

In this section, we note two caveats regarding
the scope of Theorem~\ref{thm:gen-equiv}. The first caveat concerns the
distinction between information and computation. The second
caveat identifies the query regimes for which the equivalence in Theorem~\ref{thm:gen-equiv} holds.

The first caveat is that the equivalence of Theorem~\ref{thm:gen-equiv} is
information-theoretic. Our bounds are on the context compaction budget, i.e., the
number of bits the compacted context must use to answer a set of queries within
error $\epsilon$. Our bounds concern the information needed for the context compaction but do not address the
computation a context compaction algorithm must perform to attain it. An upper bound on the context compaction budget
exhibits some $\mathsf{GEN}$ algorithm that meets it, but does not guarantee
that a particular context compaction algorithm, such as a summarizer whose interpreter is an LLM call,
can compute it. A lower bound on the context compaction budget does not have this caveat. The lower bound says
the compacted context must carry at least a given number of bits, which no interpreter,
however it is computed, can reduce. By Theorem~\ref{thm:gen-equiv},
the minimum $\mathsf{GEN}$ budget equals the minimum one-way protocol budget for
the induced problem $\Pi_\mathcal{G}$, so a lower bound for $\Pi_\mathcal{G}$
bounds every $\mathsf{GEN}$ algorithm, regardless of how its interpreter is
computed.

The second caveat is that the equivalence of Theorem~\ref{thm:gen-equiv} is also restricted to
the stochastic and oblivious query regimes of Definition~\ref{def:regimes}.
Under an \emph{adaptive} adversary, which observes the condenser's output
before choosing $q$, Theorem~\ref{thm:gen-equiv} does not apply. We leave
characterizing the adaptive case as an open problem in
Section~\ref{sec:discussion}.

\section{Discussion and Open Problems}
\label{sec:discussion}

Theorem~\ref{thm:gen-equiv} helps an agent designer calculate the minimum context compaction budget required for a given set of possible future queries or a distribution over future queries that a user might ask. The minimum context compaction budget for answering these queries within a target error is equal to the one-way communication complexity of the induced communication problem at the same error. Prior work in the field of communication complexity has computed the one-way communication complexity for many types of queries such as set membership queries (see Appendix~\ref{app:empirical}) and equality queries~\citep{kushilevitz1997communication}.

The theorem also helps agent designers understand the fundamental limitations of context compaction. For example, consider an agent that scans a repository and records the set of third-party dependencies it uses. Assume the agent must answer the following query.

\begin{querybox}
Do any of these dependencies appear in this list of packages with known Common Vulnerabilities and Exposures (CVEs)?
\end{querybox}

This is a set disjointness query, where $S$ is the set of dependencies and $T$ is the set of packages with known CVEs, and the user is asking whether $S \cap T = \emptyset$. Set disjointness is a canonical hard problem in communication complexity~\citep{roughgarden2016communication}. Suppose each package is identified by an $m$-bit string, so there are $|U| = 2^m$ possible packages, and let $N = |S|$ be the number of dependencies. Note that this set disjointness query contains set membership as a special case. For a single package $T = \{x\}$, answering $S \cap T = \emptyset$ is the same as answering $x \in S$. Since the compacted context is fixed before the query arrives, the LLM must be able to answer this membership query for any package $x \in U$. Therefore, to answer every such query correctly, the compacted context must determine $S$ exactly. Representing an $N$-element subset of $U$ requires $\log_2 \binom{|U|}{N} \geq N \log_2(|U|/N)$ bits~\citep{carter1978exact}. The number of dependencies is far smaller than the number of possible packages, so this is $\Omega(N m)$ bits. By Theorem~\ref{thm:gen-equiv}, every strategy for the Context Generation Game that answers this query correctly on all inputs must use a context compaction budget of $\Omega(N m)$ bits. This is no better than storing the dependencies uncompressed, which may be infeasible for large repositories.

One might hope to sidestep this by answering the query with approximate membership queries, e.g., using a Bloom filter. We could query each of the packages in $T$ for membership in $S$ and report an intersection if any query succeeds. This approach does not help significantly. A Bloom filter over the $N$ recorded dependencies with false positive rate $\epsilon$ uses about $1.44 \, N \log_2(1/\epsilon)$ bits. When $S$ and $T$ are disjoint, every package in $T$ is a true non-member, so the query reports a spurious intersection unless all $|T|$ membership queries answer in the negative. For an idealized Bloom filter, false positives on distinct items occur independently. All membership queries are therefore negative with probability $(1-\epsilon)^{|T|}$, and the set disjointness query errs with probability $1 - (1-\epsilon)^{|T|}$. Keeping this error below a target $\delta$ requires $(1-\epsilon)^{|T|} \ge 1-\delta$, and since $(1-\epsilon)^{|T|} \le e^{-\epsilon |T|}$, this forces $\epsilon\,|T| \le \ln\frac{1}{1-\delta} = O(\delta)$. Hence $\epsilon = O(\delta/|T|)$, and the filter must use $\Omega(N \log_2(|T|/\delta))$ bits. Since the query may name any package, $|T|$ can be as large as $|U| = 2^m$, so this is again $\Omega(N m)$ bits, the same as storing the dependencies outright. This holds even when the target error $\delta$ is a constant. This example illustrates how Theorem~\ref{thm:gen-equiv} can inform the design and limitations of context compaction in practice.

Our work also raises open problems. We state four of them below.

\paragraph{Adaptive adversaries.}
Theorem~\ref{thm:gen-equiv} covers the stochastic and oblivious query regimes of Definition~\ref{def:regimes}. An adaptive adversary observes the condenser's output before choosing its query. Theorem~\ref{thm:gen-equiv} does not apply in this regime. We leave finding a communication model whose complexity equals the minimum context compaction budget under an adaptive adversary as an open problem.

\paragraph{Separating selection from generation.}
A set of queries admits a gap between $\mathsf{SELECT}$ and $\mathsf{GEN}$ when the minimum $\mathsf{SELECT}$ budget at a given error exceeds the minimum $\mathsf{GEN}$ budget at the same error. Corollary~\ref{cor:select-restricted} shows that any such gap is a gap between subset-encoding protocols and unrestricted one-way protocols. Theorem~\ref{prop:select-gen-gap} exhibits one such gap, of size $\Theta(\log n)$. Characterizing which sets of queries admit a gap, and how large the gap can grow in general, remains open. Such a characterization would tell agent designers when generation-based context compaction can outperform selection-based context compaction.

\paragraph{Repeated context compaction.}
Our model captures a single invocation of context compaction. Agents do context compaction many times over a long session. In deployed agents, the compacted summary typically replaces the conversation history that the agent loop uses, so a later context compaction acts on a state that contains the summaries produced by earlier context compactions. Codex warns its users that repeated context compaction degrades quality, as we discussed in Section~\ref{subsec:existing-compaction}. We leave extending our model to a sequence of context compactions as an open problem. The goal is to characterize how the error on a fixed set of queries grows with the number of context compactions.

\paragraph{Computationally efficient context compaction.}
Our bounds are information-theoretic, as we discussed in Section~\ref{subsec:computation-vs-communication}. A budget that is attainable in principle may not be attainable by a realistic context compaction algorithm such as an LLM-based summarizer. Determining which optimal budgets remain attainable when the condenser and the interpreter must run in polynomial time, or when the interpreter is an LLM call, is an open problem. A concrete first question is whether an LLM can reliably simulate the decoding procedure of a sketch~\citep{cormode2005improved} placed in its context, as we discuss in Appendix~\ref{app:empirical}.

\section*{Generative AI Disclosure}

The vast majority of the content of this paper is human written. Generative AI was used as an assistant for the following three purposes. The first purpose was to help the authors review and make targeted corrections to the text that improve grammar, spelling, mathematical notation, correctness of the proofs, \LaTeX{} code, and clarity.
The second purpose was to help the authors write the code for the empirical case study in Appendix~\ref{app:empirical}, which was then reviewed and corrected by the authors. The final purpose was to help the authors search for related work, which was then read and verified by the authors. Note that all of the content of this paper has been human reviewed and verified for correctness prior to being declared 
ready for the purposes of a preprint. The following LLMs were used for the above discussed purposes: Anthropic's Claude Opus 4.8 and Claude Fable 5, DeepSeek's DeepSeek R1, and Google's Gemma 4 31B Thinking.

\bibliographystyle{plainnat}
\bibliography{references}

@misc{openai2025codex,
  author       = {{OpenAI}},
  title        = {{Codex CLI}: Lightweight Coding Agent that Runs in Your Terminal},
  year         = {2025},
  howpublished = {\url{https://github.com/openai/codex}},
  note         = {Accessed: 2026-04-25}
}

@misc{opencode2026,
  author       = {{SST}},
  title        = {{OpenCode}: The Open Source AI Coding Agent},
  year         = {2026},
  howpublished = {\url{https://github.com/sst/opencode}},
  note         = {Accessed: 2026-04-25}
}

@misc{block2025goose,
  author       = {{Block}},
  title        = {{Goose}: An Open-Source, Extensible AI Agent},
  year         = {2025},
  howpublished = {\url{https://github.com/aaif-goose/goose}},
  note         = {Accessed: 2026-04-25}
}

@misc{ibmreact,
  howpublished = {\url{https://www.ibm.com/think/topics/react-agent}},
  title        = {ReAct Agent},
  author = {{Dave Bergmann}},
  publisher       = {{IBM}},
  accessed      = {2026-04-29},
  year         = {2026}
}

@inproceedings{yao2022react,
  title={ReAct: Synergizing Reasoning and Acting in Language Models},
  author={Yao, Shunyu and Zhao, Jeffrey and Yu, Dian and Du, Nan and Shafran, Izhak and Narasimhan, Karthik and Cao, Yuan},
  booktitle={International Conference on Learning Representations (ICLR)},
  year={2023}
}

@inproceedings{almashaqbeh2025adversary,
  title={Adversary resilient learned bloom filters},
  author={Almashaqbeh, Ghada and Bishop, Allison and Tirmazi, Hayder},
  booktitle={International Conference on the Theory and Application of Cryptology and Information Security},
  pages={171--202},
  year={2025},
  organization={Springer}
}

@article{zhang2025recursive,
  title={Recursive language models},
  author={Zhang, Alex L and Kraska, Tim and Khattab, Omar},
  journal={arXiv preprint arXiv:2512.24601},
  year={2025}
}

@article{hergert2025brittleness,
  title={On the Brittleness of {LLMs}: A Journey around Set Membership},
  author={Hergert, Lea and Berend, G{\'a}bor and Szegedy, Mario and Tur{\'a}n, Gy{\"o}rgy and Jelasity, M{\'a}rk},
  journal={arXiv preprint arXiv:2511.12728},
  year={2025}
}

@article{guo2026hallucination,
  title={Hallucination is a Consequence of Space-Optimality: A Rate-Distortion Theorem for Membership Testing},
  author={Guo, Anxin and Li, Jingwei},
  journal={arXiv preprint arXiv:2602.00906},
  year={2026}
}

@inproceedings{rae2019metalearning,
  title={Meta-Learning Neural Bloom Filters},
  author={Rae, Jack W and Bartunov, Sergey and Lillicrap, Timothy P},
  booktitle={Proceedings of the 36th International Conference on Machine Learning (ICML)},
  year={2019}
}

@article{liu2023lost,
  title={Lost in the Middle: How Language Models Use Long Contexts},
  author={Liu, Nelson F and Lin, Kevin and Hewitt, John and Paranjape, Ashwin and Bevilacqua, Michele and Petroni, Fabio and Liang, Percy},
  journal={Transactions of the Association for Computational Linguistics},
  volume={12},
  pages={157--173},
  year={2024}
}

@article{schmidgall2025agent,
  title={Agent laboratory: Using llm agents as research assistants},
  author={Schmidgall, Samuel and Su, Yusheng and Wang, Ze and Sun, Ximeng and Wu, Jialian and Yu, Xiaodong and Liu, Jiang and Moor, Michael and Liu, Zicheng and Barsoum, Emad},
  journal={Findings of the Association for Computational Linguistics: EMNLP 2025},
  pages={5977--6043},
  year={2025},
  publisher={Association for Computational Linguistics}
}

@misc{anthropic2025compactioncookbook,
    author       = {{Anthropic}},
    title        = {Automatic Context Compaction},
    year         = {2025},
    howpublished = {\url{https://platform.claude.com/cookbook/tool-use-automatic-context-compaction}},
    note         = {Accessed: 2026-04-13}
  }

@article{zhu2026paperbanana,
  title={PaperBanana: Automating Academic Illustration for AI Scientists},
  author={Zhu, Dawei and Meng, Rui and Song, Yale and Wei, Xiyu and Li, Sujian and Pfister, Tomas and Yoon, Jinsung},
  journal={arXiv preprint arXiv:2601.23265},
  year={2026}
}

@article{kang2025acon,
  title={Acon: Optimizing context compression for long-horizon llm agents},
  author={Kang, Minki and Chen, Wei-Ning and Han, Dongge and Inan, Huseyin A and Wutschitz, Lukas and Chen, Yanzhi and Sim, Robert and Rajmohan, Saravan},
  journal={arXiv preprint arXiv:2510.00615},
  year={2025}
}

@misc{anthropic2025multiagent,
    author       = {{Anthropic}},
    title        = {How We Built Our Multi-Agent Research System},
    year         = {2025},
    howpublished = {\url{https://www.anthropic.com/engineering/multi-agent-research-system}},
    note         = {Accessed: 2026-04-13}
  }

@misc{anthropic2025claudecode,
  author       = {{Anthropic}},
  title        = {Claude {C}ode},
  year         = {2025},
  howpublished = {\url{https://code.claude.com/docs}},
  note         = {Accessed: 2026-04-13}
}

@misc{google2025geminicli,
  author       = {{Google}},
  title        = {{Gemini CLI}},
  year         = {2025},
  howpublished = {\url{https://github.com/google-gemini/gemini-cli}},
  note         = {Accessed: 2026-04-13}
}

@misc{cursor2024cursor,
  author       = {{Anysphere}},
  title        = {{Cursor}: The {AI} Code Editor},
  year         = {2024},
  howpublished = {\url{https://cursor.com}},
  note         = {Accessed: 2026-04-13}
}

@misc{stripe2026minions,
  author       = {{Stripe}},
  title        = {Minions: {S}tripe's One-Shot, End-to-End Coding Agents},
  year         = {2026},
  howpublished = {\url{https://stripe.dev/blog/minions-stripes-one-shot-end-to-end-coding-agents}},
  note         = {Accessed: 2026-04-13}
}

@misc{langgraph,
  author = {LangChain},
  title = {LangGraph},
  year = {2026},
  howpublished = {\url{https://www.langchain.com/langgraph}}
}

@article{li2026latent,
  title={Latent Context Compilation: Distilling Long Context into Compact Portable Memory},
  author={Li, Zeju and Zhou, Yizhou and Xu, Qiang},
  journal={arXiv preprint arXiv:2602.21221},
  year={2026}
}

@inproceedings{mitzenmacher2018model,
  title={A Model for Learned {B}loom Filters and Optimizing by Sandwiching},
  author={Mitzenmacher, Michael},
  booktitle={Advances in Neural Information Processing Systems},
  volume={31},
  year={2018}
}

@inproceedings{kraska2018case,
  title={The case for learned index structures},
  author={Kraska, Tim and Beutel, Alex and Chi, Ed H and Dean, Jeffrey and Polyzotis, Neoklis},
  booktitle={Proceedings of the 2018 International Conference on Management of Data},
  pages={489--504},
  year={2018}
}

@inproceedings{jiang2023llmlingua,
  title={{LLMLingua}: Compressing Prompts for Accelerated Inference of Large Language Models},
  author={Jiang, Huiqiang and Wu, Qianhui and Lin, Chin-Yew and Yang, Yuqing and Qiu, Lili},
  booktitle={Proceedings of the 2023 Conference on Empirical Methods in Natural Language Processing},
  year={2023}
}

@inproceedings{jiang2024llmlingua2,
  title={{LLMLingua-2}: Data Distillation for Efficient and Faithful Task-Agnostic Prompt Compression},
  author={Jiang, Huiqiang and Wu, Qianhui and others},
  booktitle={Findings of the Association for Computational Linguistics: ACL 2024},
  pages={963--981},
  year={2024}
}

@misc{claude2026compaction,
  title={Compaction},
  author={{Anthropic}},
  howpublished={\url{https://platform.claude.com/docs/en/build-with-claude/compaction}},
  year={2026}
}

@article{broder2004network,
  title={Network applications of {Bloom} filters: A survey},
  author={Broder, Andrei and Mitzenmacher, Michael},
  journal={Internet Mathematics},
  volume={1},
  number={4},
  pages={485--509},
  year={2004}
}

@misc{shahout2026orlalibraryservingllmbased,
      title={Orla: A Library for Serving LLM-Based Multi-Agent Systems},
      author={Rana Shahout and Hayder Tirmazi and Minlan Yu and Michael Mitzenmacher},
      year={2026},
      eprint={2603.13605},
      archivePrefix={arXiv},
      primaryClass={cs.AI},
      url={https://arxiv.org/abs/2603.13605},
}

@misc{siddhartha2021malicious,
  title={Malicious {URLs} Dataset},
  author={Siddhartha, Manu},
  howpublished={\url{https://www.kaggle.com/datasets/sid321axn/malicious-urls-dataset}},
  year={2021}
}

@article{bloom1970space,
  title={Space/time trade-offs in hash coding with allowable errors},
  author={Bloom, Burton H},
  journal={Communications of the ACM},
  volume={13},
  number={7},
  pages={422--426},
  year={1970},
  publisher={ACM}
}

@inproceedings{carter1978exact,
  title={Exact and approximate membership testers},
  author={Carter, Larry and Floyd, Robert and Gill, John and Markowsky, George and Wegman, Mark},
  booktitle={Proceedings of the Tenth Annual ACM Symposium on Theory of Computing (STOC)},
  pages={59--65},
  year={1978}
}

@inproceedings{pagh2013approximate,
  title={How to approximate a set without knowing its size in advance},
  author={Pagh, Rasmus and Segev, Gil and Wieder, Udi},
  booktitle={Proceedings of the 54th Annual IEEE Symposium on Foundations of Computer Science (FOCS)},
  pages={80--89},
  year={2013}
}

@misc{openai2024assistants,
  author       = {{OpenAI}},
  title        = {Assistants {API}: Truncation Strategy},
  year         = {2024},
  howpublished = {\url{https://developers.openai.com/api/docs/assistants/deep-dive}},
  note         = {Accessed: 2026-04-26}
}

@misc{awsbedrock,
  author = {{Amazon Web Services}},
  title = {AWS Bedrock},
  year = {2026},
  howpublished = {\url{https://aws.amazon.com/bedrock/}},
  note = {Accessed: 2026-06-23}
}

@misc{langchain2024trim,
  author       = {{LangChain}},
  title        = {Trim Messages},
  year         = {2024},
  howpublished = {\url{https://reference.langchain.com/python/langchain-core/messages/utils/trim_messages}},
  note         = {Accessed: 2026-04-26}
}

@misc{gauthier2026aider,
  author       = {Gauthier, Paul},
  title        = {Aider: {AI} Pair Programming in Your Terminal},
  year         = {2026},
  howpublished = {\url{https://aider.chat/docs/repomap.html}},
  note         = {Accessed: 2026-04-26}
}

@inproceedings{xiao2024streamingllm,
  title     = {Efficient Streaming Language Models with Attention Sinks},
  author    = {Xiao, Guangxuan and Tian, Yuandong and Chen, Beidi and Han, Song and Lewis, Mike},
  booktitle = {International Conference on Learning Representations (ICLR)},
  year      = {2024}
}

@inproceedings{zhang2023h2o,
  title     = {{H2O}: Heavy-Hitter Oracle for Efficient Generative Inference of Large Language Models},
  author    = {Zhang, Zhenyu and Sheng, Ying and Zhou, Tianyi and Chen, Tianlong and Zheng, Lianmin and Cai, Ruisi and Song, Zhao and Tian, Yuandong and R{\'e}, Christopher and Barrett, Clark and Wang, Zhangyang and Chen, Beidi},
  booktitle = {Advances in Neural Information Processing Systems (NeurIPS)},
  year      = {2023}
}

@inproceedings{li2023selective,
  title     = {Compressing Context to Enhance Inference Efficiency of Large Language Models},
  author    = {Li, Yucheng and Dong, Bo and Guerin, Frank and Lin, Chenghua},
  booktitle = {Conference on Empirical Methods in Natural Language Processing (EMNLP)},
  year      = {2023}
}

@book{kushilevitz1997communication,
  title     = {Communication Complexity},
  author    = {Kushilevitz, Eyal and Nisan, Noam},
  year      = {1997},
  publisher = {Cambridge University Press}
}

@article{cormode2005improved,
  title     = {An improved data stream summary: the count-min sketch and its applications},
  author    = {Cormode, Graham and Muthukrishnan, S.},
  journal   = {Journal of Algorithms},
  volume    = {55},
  number    = {1},
  pages     = {58--75},
  year      = {2005},
  publisher = {Elsevier}
}

@misc{anthropic2026pricing,
  title        = {Pricing},
  author       = {{Anthropic}},
  howpublished = {\url{https://platform.claude.com/docs/en/about-claude/pricing}},
  year         = {2026},
  note         = {Accessed 2026-07-20}
}

@misc{artificialanalysis2026fable,
  title        = {Claude Fable 5: Intelligence, Performance and Price Analysis},
  author       = {{Artificial Analysis}},
  howpublished = {\url{https://artificialanalysis.ai/models/claude-fable-5}},
  year         = {2026},
  note         = {Accessed 2026-07-20}
}

@misc{anthropic2026promptcaching,
  title        = {Prompt caching},
  author       = {{Anthropic}},
  howpublished = {\url{https://platform.claude.com/docs/en/build-with-claude/prompt-caching}},
  year         = {2026},
  note         = {Accessed 2026-07-20}
}

@article{roughgarden2016communication,
  title={Communication complexity (for algorithm designers)},
  author={Roughgarden, Tim},
  journal={Foundations and Trends in Theoretical Computer Science},
  volume={11},
  number={3-4},
  pages={217--404},
  year={2016},
  publisher={now Publishers}
}

@inproceedings{adabf,
  author       = {Zhenwei Dai and
                  Anshumali Shrivastava},
  title        = {Adaptive Learned Bloom Filter (Ada-BF): Efficient Utilization of the
                  Classifier with Application to Real-Time Information Filtering on
                  the Web},
  booktitle    = {Advances in Neural Information Processing Systems (NeurIPS)},
  year         = {2020}
}

@inproceedings{sato_matsui,
author = {Sato, Atsuki and Matsui, Yusuke},
title = {Fast partitioned learned bloom filter},
booktitle = {International Conference on Neural Information Processing Systems (NeurIPS)},
year = {2023}
}

@inproceedings{plbf,
  title={Partitioned Learned Bloom Filters},
  author={Vaidya, Kapil and Knorr, Eric and Mitzenmacher, Michael and Kraska, Tim},
  booktitle={International Conference on Learning Representations},
  year= {2021}
}

\appendix

\section{Context Compaction in the Wild}
\label{app:empirical}

We test how close a deployed context compaction endpoint comes to the optimal context compaction budget for set membership queries. A membership query asks whether an item appears in the set of items an agent has recorded. For example, a user may ask an agent whether a codebase references a deprecated API. Set membership is a natural benchmark because its optimal context compaction budget is known. For an arbitrary set of $N$ items, any approximate membership tester with no false negatives and false positive rate at most $\epsilon$ requires at least $N \log_2(1/\epsilon)$ bits in the worst case~\citep{carter1978exact,pagh2013approximate}. A Bloom filter~\citep{bloom1970space} attains this bound within a factor of $\log_2 e \approx 1.44$~\citep{broder2004network}. We use the context compaction endpoint of Anthropic's Claude API~\citep{claude2026compaction} with Opus 4.8 as the LLM. Following prior empirical work on set membership~\cite{adabf,sato_matsui,plbf}, we use the Malicious URLs dataset~\cite{siddhartha2021malicious} as the recorded set for our case study. URLs share structure such as common domains and path patterns. A context compaction algorithm can exploit this structure to represent the set in fewer bits. The dataset is therefore a favorable case for the endpoint compared to a set of uniformly random strings. Learned Bloom filters~\cite{mitzenmacher2018model} exploit the same structure and perform well on this dataset~\cite{almashaqbeh2025adversary,plbf}. We uniformly sample $15{,}000$ URLs from the dataset. Anthropic's context compaction endpoint runs only once the conversation grows past a token threshold. We set this threshold to $50{,}000$ tokens. Our $15{,}000$ URLs take up about $500{,}000$ tokens and therefore trigger context compaction. We record the URLs in the LLM's context and do context compaction on them using the prompts reproduced below. The context compaction prompt states that the result will only be used for membership queries. The endpoint therefore knows its workload in advance. This is analogous to how a Bloom filter is built for a known workload. After context compaction, we ask $200$ membership queries. Half of the queries are on items sampled uniformly at random from the recorded set. These queries probe false negatives. The other half are on items sampled uniformly at random from the dataset URLs that were not recorded. These queries probe false positives. We send each query in a separate request that contains only the compacted context. The LLM answers each query with a structured JSON output of yes or no. We repeat the experiment on $3$ random seeds that control the choice of the set and the queries. The context compaction budget of a run is the size in bits of the natural language summary the endpoint returns. Anthropic provides this summary unencrypted. Our code is available at \href{https://github.com/jadidbourbaki/context-compaction-experiments}{github.com/jadidbourbaki/context-compaction-experiments}.

\begin{promptbox}[title=Prompt for the Context Compaction Endpoint]
\small\ttfamily
\spkuser{} You have recorded a set of strings. List every string in the set, one per line.

\spkasst{} [the 15,000 URLs, one per line]

\spkuser{} The set is complete. Later in this conversation you will be asked whether a given string was or was not in the set. Reply with OK.
\end{promptbox}

\begin{promptbox}[title=Compaction Prompt]
\small\ttfamily
You are compacting this conversation. Afterward the original messages 
    are discarded, and you must answer set membership queries using only 
    your compacted summary: for a given string, whether it was among the 
    strings listed above. Compact the conversation so as to minimize the
    number of membership queries you answer incorrectly. Use whatever
    representation best achieves this.
\end{promptbox}

\begin{promptbox}[title=Membership Query Prompt]
\small\ttfamily
Earlier in this conversation you listed a set of strings. Was the following string in the set?

String: [the queried URL]
\end{promptbox}

Opus 4.8 does context compaction on the $15{,}000$ URLs, producing a summary of about $14$ kilobits. It answers the membership queries with error rates $0.505$, $0.535$, and $0.555$ across the three seeds. The error rate counts both false positives and false negatives. This ensures that a context compaction algorithm that forgets every item and always answers no has an error rate of $0.5$. Every run lands on the random guess line in Figure~\ref{fig:membership}. A Bloom filter of the same size errs on about a third of the queries. Half of our queries are non-members. The error rate of a Bloom filter with budget $B$ is therefore $\tfrac{1}{2}\,e^{-(\ln 2)^2 B/N}$. The information-theoretic lower bound is $\tfrac{1}{2}\,2^{-B/N}$~\citep{carter1978exact,pagh2013approximate}. Table~\ref{tab:membership-fn} breaks each run into false positive and false negative rates. The total error rate stays close to that of a random guess across the seeds. The split between false positive and false negative errors varies from seed to seed. This behavior is consistent with a context compaction algorithm that retains no membership information. As a control, we keep all $15{,}000$ URLs in the context with no context compaction and ask the same queries. Opus 4.8 then answers with an error rate of only $0.02$. The final row of Table~\ref{tab:membership-fn} shows the control run. Information lost during context compaction therefore causes the error in the main experiment. The compacted summaries are reproduced verbatim at the end of this appendix. In these summaries, the LLM states that it cannot losslessly store the set. It falls back to a description of the set's general character.

\begin{figure}[t]
  \centering
  \includegraphics[width=0.8\linewidth]{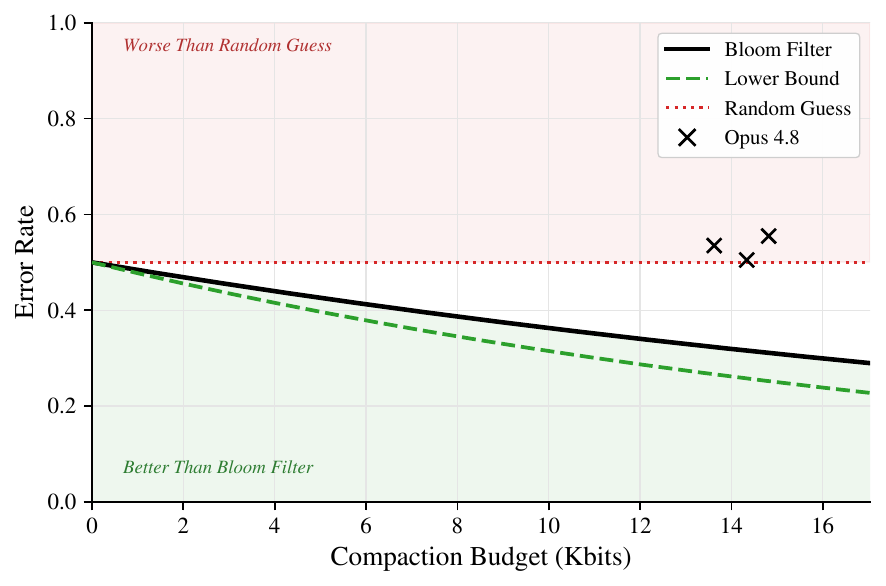}
  \caption{Membership error rate against the context compaction budget for Anthropic's context compaction endpoint using Opus 4.8 on the Malicious URLs dataset. The budget $B$ is the size in bits of the natural-language summary the endpoint returns, namely its character count times eight. The solid curve
  is the error rate of a Bloom filter that uses the same number of bits, $\tfrac{1}{2}\,e^{-(\ln 2)^2 B/N}$. The dashed
  green curve is the information-theoretic lower bound $\tfrac{1}{2}\,2^{-B/N}$ for approximate membership queries. The dotted red line is the error rate of a random guess, and $N = 15{,}000$ is the number of entries in the original set.}
  \label{fig:membership}
\end{figure}

\begin{table}[t]
  \centering
  \caption{Per-run false positive and false negative rates for Opus 4.8 behind the error
  rates in Figure~\ref{fig:membership}. Under context compaction the total error rate stays
  close to that of a random guess. The split between false positive and false
  negative errors varies from seed to seed. The final row shows results for the case
  where no context compaction is done and the full set is kept in context.}
  \label{tab:membership-fn}
  \begin{tabular}{lrrr}
    \toprule
    Run & Budget (Kbits) & False Positive Rate & False Negative Rate \\
    \midrule
    Seed 42 & 14.3 & 0.04 & 0.97 \\
    Seed 43 & 13.6 & 0.28 & 0.79 \\
    Seed 44 & 14.8 & 0.48 & 0.63 \\
    No Context Compaction & 7280 & 0.00 & 0.04 \\
    \bottomrule
  \end{tabular}
\end{table}

This experiment does not demonstrate that context compaction algorithms for set membership query workloads necessarily perform significantly worse than a Bloom filter in real-world use cases. This experiment merely captures a snapshot of one deployed endpoint, i.e., Anthropic's server-side context compaction endpoint with Claude Opus 4.8, on one workload, namely set membership on the Malicious URLs dataset, at the time of writing. We note that 1) other context compaction endpoints, 2) future versions of Anthropic's context compaction endpoint, and 3) Anthropic's context compaction endpoint tested with either a different LLM or a different prompt, may perform differently. The goal of our experiment is to demonstrate how Theorem~\ref{thm:gen-equiv} enables such a measurement of context compaction algorithms in the first place. The theorem gives the optimal context compaction budget for set membership queries. This lets us state precisely how far a given context compaction algorithm is from the optimum. In our experiment, we find that the endpoint retains essentially no membership information. This holds even though the endpoint is told in advance that its workload will be set membership queries and is provided a set that compresses well. The experiment also raises open questions. We can repeat the same measurement on future endpoints to track whether the gap closes. Closing the gap may require context compaction that produces sketch-like representations. One path is a tool that maintains a sketch outside the LLM's context. Another path is placing a sketch's raw state in the context together with decoding instructions. Whether an LLM can execute such decoding reliably is an open question. How deployed context compaction fares on other types of queries, such as the equality and set disjointness queries discussed in Section~\ref{sec:discussion}, is also an open question.

\begin{compactionbox}[title={Opus 4.8 context compaction output, seed 42 (14.3 Kbits, error 0.505)}]
OK

I need to be honest about a fundamental limitation: the set contained on the order of 10,000+ URLs/strings. It is impossible to losslessly compress that into a summary that fits in a reasonable space. Any lossy summary will produce errors on membership queries. Let me record what's most useful.

**Summary of the set:**

The set was a very large collection (~10,000+ entries) of URLs, domain names, and web addresses, apparently scraped/aggregated. Characteristics:

- Mix of legitimate sites (wikipedia.org, youtube.com, amazon.com, facebook.com, linkedin.com, imdb.com, espn.go.com, twitter.com, mylife.com, manta.com, etc.) and many phishing/malware URLs (fake paypal, battle.net, bank login pages, IP-address-based malware droppers like http://X.X.X.X/Mozi.m, /SBIDIOT/, /bins/, etc.).
- Heavy presence of: paypal phishing variants, runescape/battle.net phishing, Mozi botnet IPs, foreign-language CMS URLs (Joomla com_content patterns), genealogy sites (rootsweb, familytreemaker), sports stats sites, obituary/legacy.com pages, local business directories.
- Many entries about Montreal/Quebec, Canadian topics, hockey, French-Canadian names.
- Many URL-encoded (%XX) and percent-escaped strings, some with backslash-escaped special chars.

**For membership queries, my honest strategy given I cannot store all entries:**

Since I cannot reliably reconstruct exact membership, I will have to guess based on whether a string "looks like" it belongs to this kind of aggregated set. Given the set is enormous and diverse, a query string drawn from typical web URLs has a meaningful chance of being present, but I cannot verify exact matches.

I acknowledge I will answer many membership queries incorrectly because lossless retention was not feasible within the context compaction constraints.
\end{compactionbox}

\begin{compactionbox}[title={Opus 4.8 context compaction output, seed 43 (13.6 Kbits, error 0.535)}]
OK. I've recorded the set, but I must be honest about a critical limitation: the set contained on the order of 8,000+ unique URL/domain strings, far too many to store verbatim or even hash-compress within any summary I can write here. Any lossy summary I produce will cause many membership errors.

Given this constraint, here is my best-effort compacted representation capturing the dominant characteristics of the set, which I'll use to make educated membership judgments:

**Character of the set:** A large collection of raw URLs, domains, and web addresses -- heavily skewed toward:
- Phishing/scam URLs (PayPal, eBay, banking impersonation, Apple ID, battle.net, runescape spoofs, with long obfuscated paths, base64 tokens, hex strings)
- Mainstream sites: wikipedia.org, youtube.com, facebook.com, imdb.com, amazon.com, linkedin.com, myspace.com, twitter.com, flickr.com
- Many Canadian/Quebec topics (Montreal, hockey, CFL, French-Canadian names)
- Sports references (baseball, hockey, NFL, soccer), music/band pages, genealogy sites
- Malware payload URLs (raw IPs with paths like /bins/, /SBIDIOT/, .exe, .arm7, Mozi.m)
- Many obscure small business/personal/foreign-language domains
- Mix of http://, https://, and bare-domain entries

**My answering strategy:** For queries, I'll judge plausibility based on whether a string matches these patterns. I will lean toward "not in set" for clean, generic, well-known URLs that don't appear distinctive, and toward "in set" for strings matching the phishing/malware/obscure-domain/Canadian-topic patterns -- while acknowledging I cannot reliably distinguish specific members.

I acknowledge my accuracy on specific membership queries will be limited.
\end{compactionbox}

\begin{compactionbox}[title={Opus 4.8 context compaction output, seed 44 (14.8 Kbits, error 0.555)}]
OK.

I need to be honest about a fundamental limitation: the set contained tens of thousands of URLs, and I cannot losslessly compress them into a summary that fits in my context. No compact representation can perfectly recall membership of such a large, arbitrary set.

Here is my best compacted summary to maximize correct answers:

## Compacted Summary

The set was a very large collection (roughly 10,000+ entries) of URLs and domain strings. Key characteristics observed:

**Dominant patterns (likely IN set):**
- Phishing/scam URLs mimicking: PayPal, Battle.net/Blizzard, Apple/iCloud, Bank of America, Wells Fargo, Chase, Bradesco, Santander, Itau, Correios, DHL
- Long obfuscated URLs with hex strings, base64 fragments, random subdomains
- IP-address-based malware URLs ending in: .arm, .arm5, .arm6, .arm7, .m68k, .mips, .mipsl, .ppc, .sh4, .spc, .x86, .i686, .i586 (Mozi, SBIDIOT, beastmode, Hilix botnet binaries)
- Mainstream sites: youtube.com/watch, facebook.com/pages, en.wikipedia.org/wiki, imdb.com, amazon.com, linkedin.com, manta.com, mylife.com, pipl.com, freebase.com, rottentomatoes.com, spoke.com, 123people
- Genealogy: rootsweb.ancestry.com, familytreemaker, freepages.genealogy
- Montreal/Quebec/Canada themed content (hockey, Canadiens, Expos)
- Heavy long-tail of obscure business/personal/regional sites
- mitsui-jyuku.mixh.jp/uploads/*.exe, apbfiber.com/openme/*.exe, toulousa.com/omg/*.exe (malware)

**Strategy:** Given I cannot distinguish reliably, when asked about a specific string, I will lean toward "IN the set" for strings matching the above patterns (especially phishing URLs, botnet binaries, mainstream content URLs, and obscure long-tail web pages), and lean "NOT in set" for clean, simple, well-known canonical URLs that look too generic.

I acknowledge my accuracy on individual queries will be imperfect.
\end{compactionbox}

\end{document}